\documentclass[aps,english,10pt,a4,superscriptaddress,twocolumn]{revtex4}  
\usepackage{amsmath,mathtools}
\usepackage{amsthm}
\usepackage{amssymb}
\usepackage{amsfonts}
\usepackage{bbm}
\usepackage{epsfig}
\usepackage{times}
\usepackage{babel}
\usepackage{color}
\usepackage{hyperref}
\usepackage{mdframed}
\usepackage{changes}
\usepackage{float}
\usepackage{soul}

\def\tr{{\rm tr}}
\def\R{{\mathbb{R}}}
\def\N{{\mathbb{N}}}
\def\H{{\mathcal{H}}}
\def\id{{\mathbf{1}}}

\newcommand{\sunderb}[2]{
  \mathclap{\underbrace{\makebox[#1]{$\cdots$}}_{#2}}
}

\newtheorem{theorem}{Theorem}

\newtheorem{lemma}[theorem]{Lemma}

\newtheorem{corollary}[theorem]{Corollary}

\definecolor{shadecolor}{gray}{0.95}

\begin{document}
\title{Correlating thermal machines and the second law at the nanoscale}
\author{Markus P.\ M\"uller}
\affiliation{Institute for Quantum Optics and Quantum Information, Austrian Academy of Sciences, Boltzmanngasse 3, A-1090 Vienna, Austria}
\affiliation{Departments of Applied Mathematics and Philosophy, University of Western
Ontario, Middlesex College, London, ON N6A 5B7, Canada}
\affiliation{Perimeter Institute for Theoretical Physics, Waterloo, ON N2L 2Y5, Canada}
\date{December 20, 2018}

\begin{abstract}
Thermodynamics at the nanoscale is known to differ significantly from its familiar macroscopic counterpart: the possibility of state transitions is not determined by free energy alone, but by an infinite family of free-energy-like quantities; strong fluctuations (possibly of quantum origin) allow to extract less work reliably than what is expected from computing the free energy difference. However, these known results rely crucially on the assumption that the thermal machine is not only exactly preserved in every cycle, but also kept uncorrelated from the quantum systems on which it acts. Here we lift this restriction: we allow the machine to become correlated with the microscopic systems on which it acts, while still exactly preserving its own state. Surprisingly, we show that this restores the second law in its original form: free energy alone determines the possible state transitions, and the corresponding amount of work can be invested or extracted from single systems exactly and without any fluctuations. At the same time, the work reservoir remains uncorrelated from all other systems and parts of the machine. Thus, microscopic machines can increase their efficiency via clever ``correlation engineering'' in a perfectly cyclic manner, which is achieved by a catalytic system that can sometimes be as small as a single qubit (though some setups require very large catalysts). Our results also solve some open mathematical problems on majorization which may lead to further applications in entanglement theory.
\end{abstract}

\maketitle

\section{Introduction}
Thermodynamics, as it is presented in the textbooks, is usually concerned with macroscopic physical systems, like large ensembles of weakly interacting gas molecules. In this regime, the law of large numbers renders fluctuations mostly irrelevant, and one obtains very precise statistical predictions simply by computing averages. One of the most important quantities in this regime is the Helmholtz free energy,
\[
   F(\rho)=\langle E\rangle_\rho-T\,S(\rho),
\]
where $\langle E\rangle_\rho$ is the average energy of the system in state $\rho$, and $S$ is its entropy. At constant ambient temperature $T$ and constant volume, transitions between two states are possible if and only if the difference between the free energies of the initial and the final state is negative. The free energy difference also tells us how much work we can extract, or need to invest, during a thermodynamic state transition.

However, this formulation of the second law applies only in the thermodynamic limit of large numbers of identically distributed or weakly interacting particles. In contrast, modern technology allows us to probe and manipulate physical systems at much smaller scales~\cite{Faucheux,Toyabe,Baugh,Alemany}, where quantum fluctuations and strong correlations may dominate. Understanding the subtleties of thermodynamics in this regime will also be relevant for some biological processes~\cite{Lloyd,Lambert,Gauger}, since evolutionary pressure tends to force microscopic machines to act as efficiently as possible in thermal environments.

\begin{figure}
\begin{center}
\includegraphics[width=.49\textwidth]{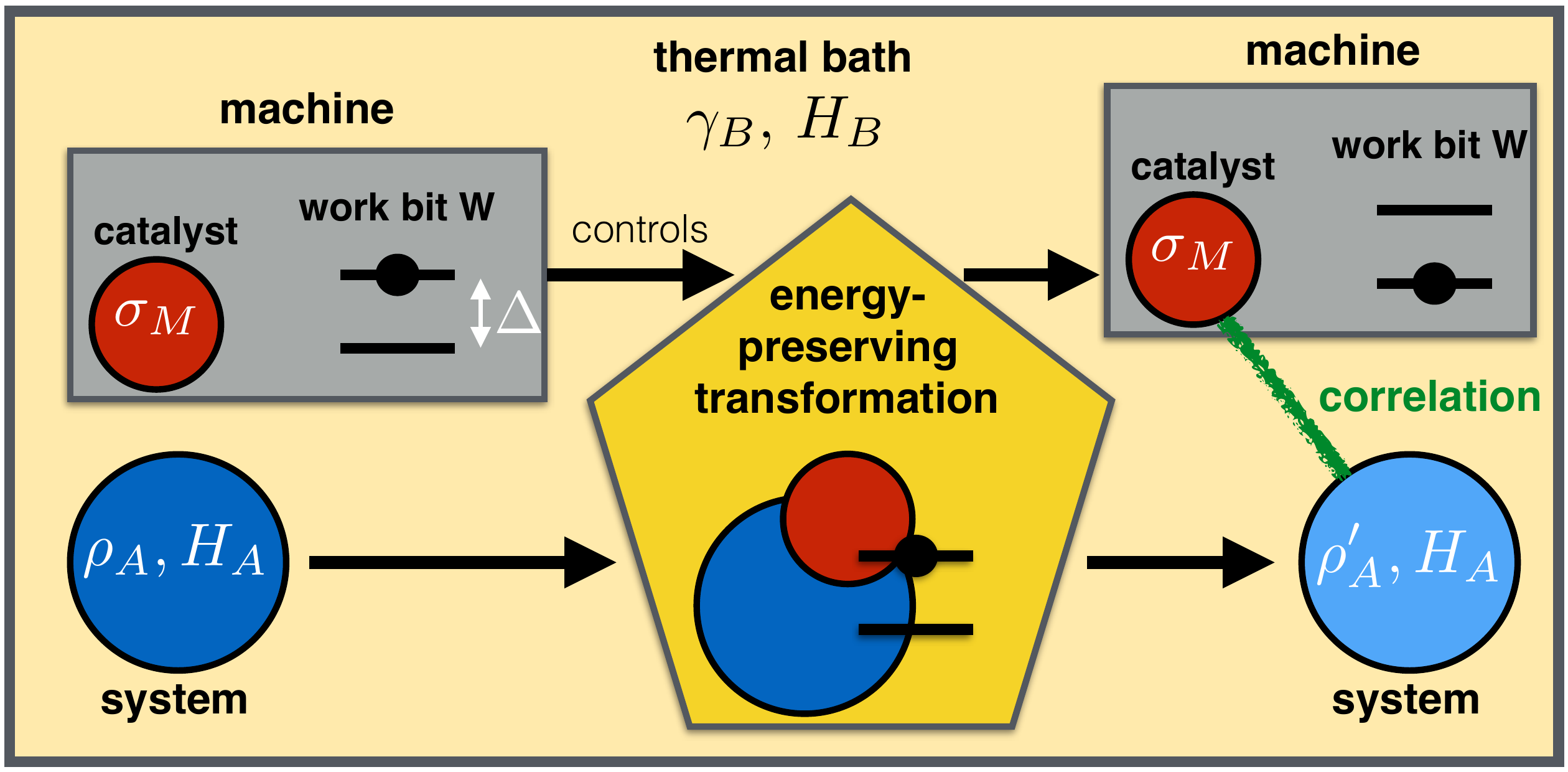}
\end{center}
\caption{\small Thermal operation of the form that we are considering in this paper; compare Figure 1 in~\cite{Brandao}. We have a system $A$ that we would like to act on, by transforming its state $\rho_A$ into another state $\rho'_A$. We have access to a heat bath with an arbitrary Hamiltonian $H_B$, which is in its thermal state $\gamma_B$ at some fixed temperature $T$. The thermal machine contains a quantum system in state $\sigma_M$, and it controls a unitary transformation $U_{AMB}$ (symbolized by the pentagon), acting on the system $A$, heat bath $B$, and its internal system $M$. Crucially, this transformation is fully energy-preserving, i.e.\ $[U_{AMB},H_A+H_M+H_B]=0$. By tracing over the heat bath, we obtain the map $\sigma_{AM}={\rm Tr}_B\left[U_{AMB}\left(\rho_A\otimes\sigma_M\otimes\gamma_B\right)U_{AMB}^\dagger\right]$, which is, in total, a thermal operation, $\rho_A\otimes\sigma_M\mapsto \sigma_{AM}$. We demand that the machine's internal state $\sigma_M$ is exactly preserved (hence $\sigma_M$ is often called a ``catalyst'': it enables the transformation, but is not consumed in the process), and we would like the resulting state of $A$, ${\rm Tr}_M\sigma_{AM}$, to be identical to (or very close to) the desired target state $\rho'_A$. The difference to~\cite{Brandao} is that we allow correlations to build up between $A$ and $M$. If work is spent or extracted, we model this by an additional two-level system (``work bit'') $W$ which, initially as well as finally, is enforced to be exactly in an energy eigenstate (ground state $|g\rangle$ or excited state $|e\rangle$). This ensures that $W$ remains uncorrelated with all other systems that are involved in this process, hence the resulting work $\Delta$ can be reliably transferred to or from an external battery.}\label{fig_sketch}
\end{figure}

With this motivation in mind, based on the techniques and ideas of quantum information theory, an approach to small-scale thermodynamics has recently been developed~\cite{HorodeckiOppenheim,Brandao,Dahlsten,BrandaoSpekkens,Aberg,Faist,Skrzypczyk,Reeb,SSP,Gour,Browne,Masanes,YungerHalpern,Narasimhachar,Frenzel,Renes,Ng,FOR,Egloff,Perry,Alhambra,WilmingGallego} which has demonstrated~\cite{HorodeckiOppenheim,Brandao} that the free energy $F$ looses its role as the unique indicator of state transitions in the microscopic regime. Instead, a family of ``$\alpha$-free energies'' $F_\alpha$ determines the possibility of thermodynamic transformations: a transition is possible if and only if $\Delta F_\alpha\leq 0$ for all $\alpha>0$. In the special case $\alpha=1$, we obtain the standard Helmholtz free energy, $F_1=F$. This recovers the usual second law, $\Delta F\leq 0$, as a special case of an infinite family of ``second laws''. Moreover, the maximal amount of work that can be reliably extracted from a state $\rho$ in contact with a heat bath is given by $F_0(\rho)+k_B T \log Z$, while the minimal amount of work that one has to invest to prepare a state becomes $F_\infty(\rho)+k_B T \log Z$, with $Z$ the partition function and $k_B$ Boltzmann's constant. In general,
\[
   F_0(\rho)<F(\rho)<F_\infty(\rho),
\]
which shows that thermodynamics looses an important reversibility property at the nanoscale: the amount of work needed to create a state exceeds the amount of work that can be extracted from that state. Intuitively, it is the appearance of fluctuations of the order of the free energy itself that is responsible for this effective irreversibility~\cite{Aberg}. It is only in the thermodynamic limit that all $F_\alpha$ become effectively close to $F=F_1$, which recovers standard macroscopic thermodynamics~\cite{BrandaoSpekkens,Brandao,Meer}.

Yet, these recent results all rely on a specific assumption which is, as we will argue, unnecessary in many important physical situations. To understand this assumption, consider transforming a state $\rho_A$ of a physical system $A$ to another state $\rho'_A$ in the presence of a heat bath (see the caption of Figure~\ref{fig_sketch} for more details). This is usually modelled by introducing another system --- a thermal machine $M$, containing a ``catalyst'' $\sigma_M$ --- such that
\begin{equation}
   \rho_A\otimes \sigma_M \mapsto \rho'_A\otimes \sigma_M
   \label{eqCatalytic}
\end{equation}
via some suitable thermal operation. Crucially, the machine starts \emph{and ends} in the same state $\sigma_M$, which means that it is retained in its original form and can be reused, which is essential for a thermodynamic cycle. But we see that, in addition to this crucial property, a further assumption is made: namely, that $A$ and $M$ end up in a product state and \emph{do not become correlated}.

Arguably, there are many situations in which this additional assumption is unwarranted. For example, imagine a microscopic machine that acts on a myriad of small quantum systems, one after the other (say, a stream of particles), and builds up correlations with them while doing so. As long as the machine encounters every system only once, these correlations will not spoil the working of the machine on further systems. This motivates us to consider more general transformations of the form
\begin{equation}
   \rho_A\otimes\sigma_M\mapsto \sigma_{AM},
   \label{eqCatalyticCorr}
\end{equation}
where the reduced final states are $\sigma_A=\rho'_A$ on $A$ and $\sigma_M$ on $M$. That is, the machine's state becomes correlated with the system on which it has acted, but it is locally exactly preserved and can be used again on other systems on which it has not acted before.

Below, we will show that this setting surprisingly restores the standard second law: it is the Helmholtz free energy $F$ that uniquely determines the possible state transitions. In particular, machines that act according to this more general prescription gain a significant advantage: they can essentially tame all fluctuations, and invest or extract the free energy difference with perfect reliability even when operating on single or strongly correlated quantum systems. In some cases, very small catalysts $M$ can already lead to significant improvements of efficiency.

This result answers a major open question of~\cite{Wilming} in the positive: Helmholtz free energy becomes the \emph{``unique criterion for the second law of thermodynamics''}. It is related to the insights of~\cite{Lostaglio}, but goes far beyond them: instead of correlating several auxiliary systems, here the machine becomes correlated with the system on which it acts (but remains otherwise intact), which is arguably a much more natural situation relevant to thermodynamics. The results of this paper also provide new insights into majorization theory, solving several open problems in that field, which may have further applications in entanglement theory~\cite{AMOP}. Namely, majorization determines the possibility of state interconversion for pure bipartite quantum states via local operations and classical communication~\cite{Nielsen}, and standard catalysis is known to enhance the possible transitions~\cite{JonathanPlenio}. Since further thermodynamics-related concepts have recently been translated into this entanglement setting~\cite{AMOP}, we think that the results of the present paper may have interesting implications in this context too. Furthermore, in contrast to earlier results~\cite{BeyondFreeEnergy}, the insights of this paper potentially continue to hold in the presence of quantum coherence (see the conjecture in Subsection~\ref{SubsecCorrExtr}).

\section{Results}
\subsection{Known results without correlation}
\label{SubsecKnown}
We are working within a framework for thermodynamics that is motivated by quantum information theory. This framework formulates thermodynamics as a \emph{resource theory}~\cite{Gour,BrandaoSpekkens}: given any state of a physical system, together with a set of rules that constrain the agent's actions (e.g.\ global energy conservation), a resource theory asks for the ultimate limits of what is possible, e.g.\ how much work the agent can extract or what state transitions she can enforce. A sketch of the setup is given in Figure~\ref{fig_sketch} (for now, ignore the ``work bit'' $W$). We have a collection of quantum systems that each come with their own Hamiltonians. This includes a microscopic system $A$, typically out of equilibrium. We would like to transform its quantum state $\rho_A$ into another state $\rho'_A$, while possibly extracting or investing some work $\Delta\geq 0$. This will be achieved with the help of a thermal machine, as explained in the caption of Figure~\ref{fig_sketch}. Crucially, all processes preserve the total energy exactly (not only its expectation value), and are performed in the presence of a heat bath at fixed temperature $T$. Microscopic reversibility is ensured by modelling global transformations as unitary operations.

As in most previous work (including~\cite{HorodeckiOppenheim} and~\cite{Brandao}), we assume that the decoherence time is much smaller than the thermalization time. This amounts to assuming that all states are block-diagonal in energy (i.e.\ $[\rho_X,H_X]=0$ for all involved quantum systems $X$), which applies to a large variety of situations in physics, including ones traditionally studied in the context of Landauer erasure~\cite{Landauer61,Landauer88}. In this semiclassical regime, the state of any system is characterized by the occupation probabilities of the different energy levels; the state is thermal if these probabilities are given by the Boltzmann distribution. It has recently been shown that coherence significantly complicates the picture~\cite{BeyondFreeEnergy,LostaglioX,Korzekwa,Cwiklinski}; studying the semiclassical regime is therefore a crucial first step even if one is interested in the more general situation with coherence. We thus defer the treatment of quantum coherence to future work, but discuss some evidence that our main result could still hold in the presence of coherence in Subsection~\ref{SubsecCorrExtr}.

In order to account very carefully for all contributions of energy and entropy, we assume that the machine can strictly only perform the following operations: energy-preserving unitaries; accessing thermal states from the bath; and ignoring heat bath degrees of freedom by tracing over them. This results in a class of transformations called \emph{thermal operations} which have the form stated in the caption of Figure~\ref{fig_sketch}. If we assume for the moment that there is no work reservoir $W$, and demand that these operations preserve the local state of the machine $M$ \emph{and also its independence from $A$}, then they describe transitions $\rho_A\to\rho'_A$ as in equation~(\ref{eqCatalytic}). It has recently been shown~\cite{Brandao} that a thermal transformation can achieve this transition (up to an arbitrarily small error on $A$) if and only if all $\alpha$-free energies decrease in the process:
\begin{equation}
   \Delta F_\alpha=F_\alpha(\rho'_A)-F_\alpha(\rho_A)\leq 0 \mbox{ for all }\alpha\geq 0.
   \label{egDeltaFAlpha}
\end{equation}
Here $F_\alpha(\rho_A)=k_B T S_\alpha(\rho_A\|\gamma_A)-k_B T \log Z_A$, with $Z_A$ the partition function of $A$, $T$ the background temperature, $k_B$ Boltzmann's constant, and $S_\alpha$ the R\'enyi divergence~\cite{vanErven} of order $\alpha$ (see Subsection~\ref{SecProof} and Appendix). For $\alpha=1$, this reduces to the well-known Helmholtz free energy $F_1=F$.

\subsection{Example: smaller work cost with a single qubit catalyst}
\label{SubsecExample}
To see that the $\alpha$-free energies impose severe constraints on the workings of a thermal machine, let us look at a simple example. Suppose that a thermal machine is supposed to heat up a system $A$ from its thermal state (of ambient temperature $T$) to infinite temperature. If $A$ is some two-level system with energies $0$ and $E_A$, and the temperature is such that $E_A/(k_B T)=\log 2$, then the initial thermal state is $\gamma_A=\left(\begin{array}{cc}2/3 & 0 \\ 0 & 1/3 \end{array}\right)$. The desired target state is $\rho'_A=\left(\begin{array}{cc} 1/2 & 0 \\ 0 & 1/2 \end{array}\right)$. The associated work cost will be delivered by an additional work bit $W$ with energy gap $\Delta>0$. It starts in its excited state $|e\rangle$ and will end up in its ground state $|g\rangle$. The machine tries to implement the transition
\[
   \gamma_A\otimes |e\rangle\langle e|_W\mapsto \rho'_A\otimes |g\rangle\langle g|_W
\]
with a work cost $\Delta$ that is as small as possible. As before, this is achieved by a catalytic thermal operation of the form
\[
   \gamma_A\otimes\sigma_M\otimes |e\rangle\langle e|_W\mapsto \rho'_A\otimes\sigma_M\otimes |g\rangle\langle g|_W,
\]
cf.~(\ref{eqCatalytic}) and Figure~\ref{fig_example}.

\begin{figure}
\begin{center}
\includegraphics[width=.3\textwidth]{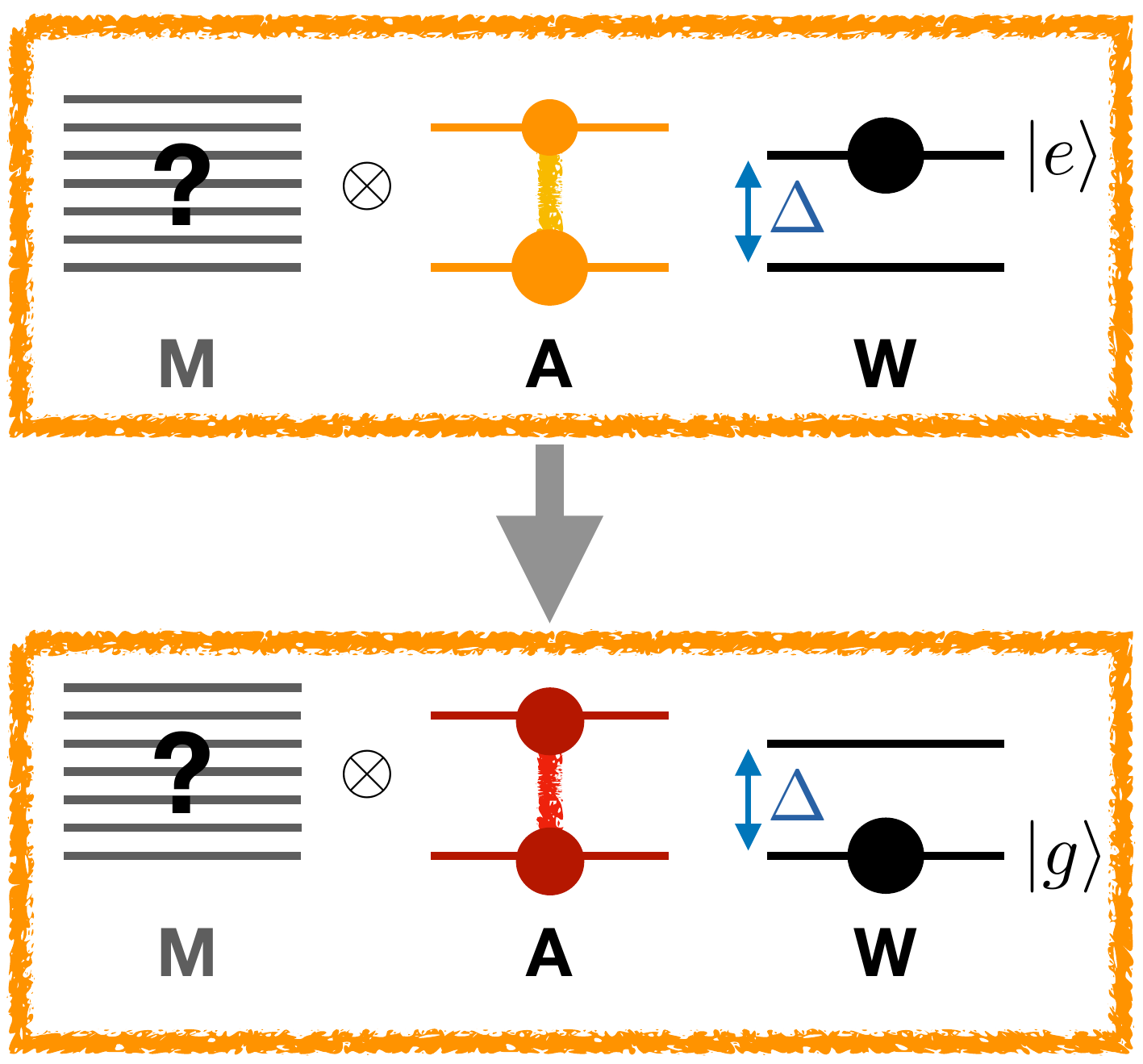}
\end{center}
\caption{\small Example of work cost scenario without allowing correlations to build up. A qubit $A$, initially in equilibrium, is supposed to be heated up to infinite temperature by spending some energy $\Delta$ and by using a (potentially large) catalytic system $M$ that remains uncorrelated with $AW$ (and unchanged by the process). A transition of this form is only possible at work cost of at least $\Delta\gtrapprox .4 k_B T$.}\label{fig_example}
\end{figure}
What is the minimal amount of work needed, i.e.\ the smallest possible $\Delta$? The $\alpha$-free energy difference (see Appendix or~\cite{Brandao} for the definition) between initial and final state of $AW$ turns out to be
\[
   \frac{\Delta F_\alpha}{k_B T} = \frac{\log(2^{1-\alpha}+1)-\alpha\log 2+(\alpha-1)\log 3}{\alpha-1}-\frac\Delta {k_B T}
\]
which is increasing in $\alpha$. Thus, this is $\leq 0$ for all $\alpha$ if and only if $\Delta F_\infty\leq 0$, which becomes
\begin{equation}
   \Delta\geq k_B T \log(3/2)\approx .4\, k_B T.
   \label{eqWorkCost}
\end{equation}
This is the ultimate limit for a transition as shown in Figure~\ref{fig_example} to be successfully implementable. On the other hand, the standard free energy difference is
$\Delta F/(k_B T)=\log 3 -3/2 \log 2 -\Delta/(k_B T)$,
and for this to be $\leq 0$ we must have
\[
   \Delta\geq k_B T\left(\log 3-3/2 \log 2\right)\approx .06\, k_B T.
\]
\begin{figure}
\begin{center}
\includegraphics[width=.3\textwidth]{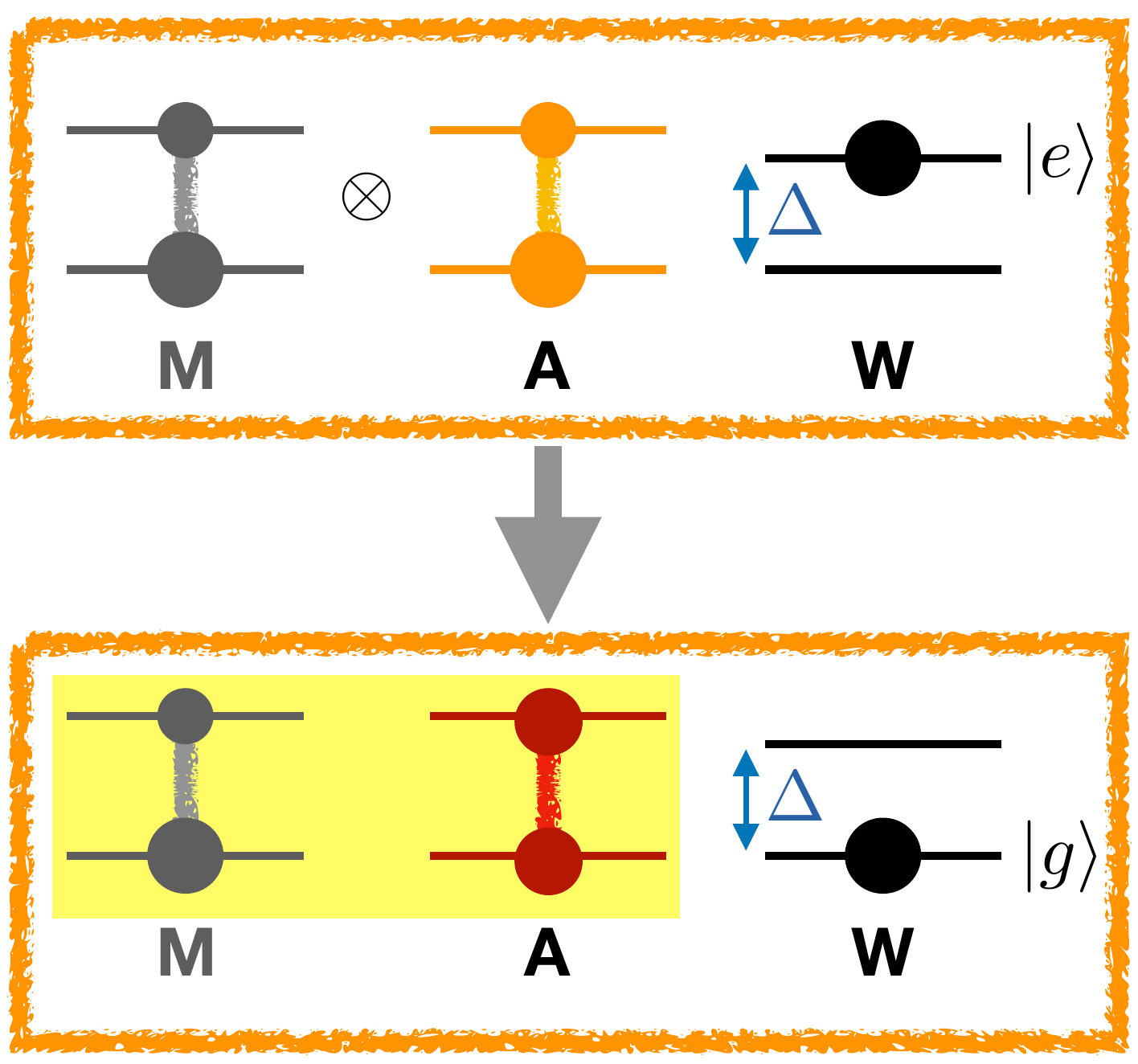}
\end{center}
\caption{\small Example work cost if correlations between $M$ and $A$ \emph{are} allowed to build up. Since $M$ is locally exactly preserved, it can be reused on further states (just not on those ones on which it has already acted before). This transition is possible at work cost of only $\Delta\gtrapprox .26 k_B T$ (about $1/3$ less than in Figure~\ref{fig_example}), even though the catalyst $M$ consists only of a single qubit.}\label{fig_example2}
\end{figure}
Thus, textbook thermodynamic reasoning would suggest that $.06\,k_B T$ of energy should be sufficient for the state transition; however, our analysis has shown that the machine needs to spend considerably more work, namely $.4\, k_B T$. As explained above, one reason for this is that we are dealing with the case of a \emph{single} system only. The standard thermodynamic equations apply to large numbers of (independent, or weakly correlated) identical systems and their averages. That is, if $\Delta^{(n)}$ is the energy (for example, energy gap of the work bit) that is needed to approximately achieve the transition
\[
   \underbrace{\gamma_A\otimes\ldots\otimes\gamma_A}_n\otimes |e\rangle\langle e|_W \mapsto \underbrace{\rho'_A\otimes\ldots\otimes \rho'_A}_n\otimes |g\rangle\langle g|_W,
\]
then $\Delta^{(n)}\approx n\Delta F$ (here $.06\, n k_B T$) as $n$ becomes large (up to corrections that are sublinear in $n$), as shown, for example, in~\cite{BrandaoSpekkens,Janzing}. Intuitively, by acting collectively on a large number of particles, a machine can achieve more than if it had to act on each particle separately. This phenomenon is again related to versions of the law of large numbers, which results in quantities becoming sharply peaked around their averages in large ensembles.

This is bad news for the machine --- what if it is essential for the given physical setup that the specific single instance of $A$ is being heated, and that very little work is spent in this process? A glance at Figure~\ref{fig_example} can guide us towards a solution: whatever transition we have there, it must come from a thermal operation that is being performed globally on the $MAW$ system. While doing so, the thermal machine better takes care of preserving the state of $M$ so that it can be reused in the future. But the way we have formulated catalytic thermal operations so far introduces yet another complication for the working of the machine: it must keep $M$ uncorrelated from $AW$. This seems hard and  overly constraining, given that interaction typically creates correlation.

We thus have two independent motivations to allow correlations between $M$ and $AW$: the difficulty to avoid correlations on interaction, and the desire to achieve higher efficiency. We will now show that the latter goal can indeed be achieved by allowing correlations to build up, \emph{even if the catalyst $M$ is as small as a single qubit}. Suppose that $M$ has a trivial Hamiltonian, $H_M=0$, and two basis states $|0\rangle$ and $|1\rangle$ (both of energy zero). Denote ground and excited state of $A$ by $|g_A\rangle$ and $|e_A\rangle$, and consider the correlated state
\begin{eqnarray*}
   \rho'_{AM}&:=&\frac 1 {10}|g_A 0\rangle\langle g_A 0| + \frac 4 {10} |g_A 1\rangle\langle g_A 1|\\
   	&&+\frac 2 {10} |e_A 0\rangle\langle e_A 0|+\frac 3 {10} |e_A 1\rangle\langle e_A 1|.
\end{eqnarray*}
By computing the partial trace, we find that $\rho'_A$ is indeed the infinite-temperature state, and
\[
   \rho'_M=\frac 3 {10} |0\rangle\langle 0|+\frac 7 {10} |1\rangle\langle 1|=:\sigma_M
\]
which will also be our local qubit catalyst state $\sigma_M$. Thus, if we enforce thermal transitions of the form
\[
   \gamma_A\otimes\sigma_M\otimes |e\rangle\langle e|_W \longrightarrow \rho'_{AM} \otimes |g\rangle\langle g|_W,
\]
then $A$ will be heated up, the local reduced state of $M$ will be preserved, and correlations will build up between $A$ and $M$ (note that there cannot be any correlations with $W$ since it is in a pure state). Now, as we show in Appendix~\ref{AppendixQubit}, this transition can be achieved by a thermal operation (without the need for any additional ``standard'' catalysts), investing only
\[
   \Delta\gtrapprox .26 k_B T
\]
of work. That is, the single qubit catalyst allows us to save about $1/3$ of the total work cost as compared to~(\ref{eqWorkCost}). One can easily imagine situations in which this represents a decisive physical advantage.

In the remainder of the paper, we will explore the ultimate limitations of this kind of ``correlating'' catalysis. We will show that these limitations are uniquely determined by Helmholtz free energy. That is, by using other suitable catalysts in the example above, one can get as close to $\Delta=\Delta F \approx .06 k_B T$ as one wishes (but not below), at the prize of having a possibly large catalyst at hand (which can however be reused).

\subsection{Correlating state transformations in general}
\label{SubsecCorrelating}
Under what conditions can a state transition as in the example above be achieved? For the moment, let us assume that there is no work bit $W$ (we will reintroduce $W$ in the next subsection). In order to implement the transition~(\ref{eqCatalyticCorr}) with a thermal operation, it is still necessary that $\Delta F_\alpha\leq 0$ on the joint system $AM$ for all $\alpha$, since this is a necessary condition for \emph{all} thermal operations. In the uncorrelated case, eq.~(\ref{eqCatalytic}), the same inequalities follow for system $A$ alone, since $F_\alpha(\rho_A\otimes \sigma_M)$ is simply the sum $F_\alpha(\rho_A)+F_\alpha(\sigma_M)$. But in the correlated case, the situation is different. In this case, it turns out that there are two special values of $\alpha$, namely $\alpha=0$ and $\alpha=1$, for which $F_\alpha$ has the important property of \emph{superadditivity}: that is,
\[
   F_\alpha(\sigma_{AM})\geq F_\alpha(\sigma_A)+F_\alpha(\sigma_M),\qquad \alpha=0,1.
\]
This allows us to obtain two conditions on the state of $A$ alone, starting with the non-increase of $F_\alpha$ on $AM$:
\begin{eqnarray*}
0&\geq& F_\alpha(\sigma_{AM})-F_\alpha(\rho_A\otimes \sigma_M)\\
&\geq& F_\alpha(\rho'_A)+F_\alpha(\sigma_M)-F_\alpha(\rho_A)-F_\alpha(\sigma_M).
\end{eqnarray*}
Thus, we conclude that
\[
   F_\alpha(\rho'_A)- F_\alpha(\rho_A)\leq 0,\qquad \alpha=0,1.
\]
But the other $F_\alpha$ are not in general superadditive, as emphasized in~\cite{Lostaglio,GEW,Wilming}, see also~\cite{Aczel,Csiszar}. Hence we cannot draw an analogous conclusion for the other $\alpha$-free energies. Moreover, the condition $F_0(\rho'_A)-F_0(\rho_A)\leq 0$ is arguably physically irrelevant for the purpose of this subsection, as a glance at its definition shows: we have
\[
   F_0(\rho_A)=-k_B T \log Z_A + k_B T S_0(\rho_A\|\gamma_A)
\]
(the ``min-free energy'' from~\cite{HorodeckiOppenheim}), where $S_0(\rho_A\|\gamma_A)=-\log\tr(\pi_{\rho_A} \gamma_A)$ is the ``min-relative entropy'' from quantum information theory~\cite{Datta}, with $\pi_{\rho_A}$ the projector onto the support of $\rho_A$. This is a discontinuous quantity which takes its minimal value whenever the state has full rank, i.e.\ no energy level has probability zero. Since there is no essential physical difference between zero population and extremely small non-zero population, we can ensure that the target state $\rho'_A$ has full rank by allowing an arbitrarily small error in the transition.

Thus, only the standard Helmholtz free energy condition $\Delta F_1\equiv \Delta F \leq 0$ survives as a relevant necessary condition for a correlating state transition. But is it also sufficient --- that is, given that it is satisfied, can we in principle always engineer the machine $M$ and its state such that transition~(\ref{eqCatalyticCorr}) is possible? This was conjectured in Ref.~\cite{Wilming}, and our first main result shows that this is indeed the case:\\

\begin{mdframed}[backgroundcolor=shadecolor,innertopmargin=\topskip]
\textbf{Main Result 1.} \emph{Consider some initial state $\rho_A$ and target state $\rho'_A$, both block-diagonal in energy. In the setting of Figure~\ref{fig_sketch} (without work bit $W$), the transition
\[
   \rho_A\otimes\sigma_M \mapsto \sigma_{AM},
\]
with $\sigma_A:={\rm Tr}_M\sigma_{AM}$ arbitrarily close to $\rho'_A$, can be achieved by a thermal operation if and only if $F(\rho_A)\geq F(\rho'_A)$, with $F$ the Helmholtz free energy. Note that the state $\sigma_M$ of the thermal machine $M$ is exactly identical before and after the transformation, and its state space is finite-dimensional.}

\emph{Furthermore, the Hamiltonian on $M$ can be chosen as $H_M=0$, and the final correlation between $A$ and $M$, as measured by the mutual information $I(A:M)_{\sigma}$, can be made arbitrarily small (but not in general zero).}\\
\end{mdframed}
The proof is sketched in Subsection~\ref{SecProof}, and given in full detail in the Appendix. As in earlier work, the catalyst $\sigma_M$ will in general depend on the initial and final states $\rho_A,\rho'_A$ and on the Hamiltonian $H_A$; it will also depend on the amount of correlation $I(A:M)_{\sigma}$ that the agent is willing to allow to build up. Therefore, we should think of the thermal machine in Figure~\ref{fig_sketch} as containing a \emph{large collection} of different catalysts $\sigma_M$. Depending on the situation, the machine will apply the corresponding suitable catalyst.

Doesn't the agent have to ``know the system state $\rho_A$'' to apply her machine accordingly? The answer to this question is that the state $\rho_A$ is supposed to model the agent's knowledge of the system $A$ in the first place, and this interpretation is chosen implicitly in most works in the present context. For example, the energy cost in Landauer erasure~\cite{Landauer61,Landauer88} is not necessarily relying on an objective ``delocalization'' of a particle in two halves of a box, but is simply due to the agent's missing knowledge about whether it will be detected in the left or the right half in any single experimental run. Consequently, the agent can always choose the catalyst that suits her knowledge of the system as encoded in her state description.

What can we say about the size of the catalyst $\sigma_M$? As we have shown by example in Subsection~\ref{SubsecExample}, in some cases the catalyst can be as small as a qubit and still allow for substantial advantages as compared to the standard ``non-correlating'' notion of catalysis. Main Result 1 formalizes the ultimate possibilities and limitations of thermal machines acting on single small quantum systems, without aiming at the use of ``realistic'' catalysts. Thus, in the proof, we will take advantage of constructing ``custom-tailored'' catalysts that can generically be very large. This is not different, however, from the case of standard catalysis~\cite{Klimesh,Turgut}. We leave the problem of finding efficient implementations of the catalysts for future work.

\subsection{Correlating work cost in general}
\label{SubsecForm}
We now consider the more general situation that we have an additional work reservoir, containing some energy $\Delta\geq 0$ that we may spend in addition to achieve the state transition. As depicted in Figure~\ref{fig_sketch}, this is modelled by a ``work bit'' $W$, a two-level system with energy gap $\Delta$, that will transition from its excited state $|e\rangle$ to its ground state $|g\rangle$ during this process. An example has been discussed in Subsection~\ref{SubsecExample} above.

We imagine that this work bit is part of a larger ``ladder'' of energy levels which we can charge or discharge like a battery in between thermodynamic cycles. It is therefore crucial to demand that the work bit $W$ does not become correlated with the other parts of the machine $M$. One way to ensure this is to demand that $W$ is always exactly, and not just approximately, in an energy eigenstate. It turns out that we can always achieve this behavior:\\

\begin{mdframed}[backgroundcolor=shadecolor,innertopmargin=\topskip]
\textbf{Main Result 2.} \emph{Consider some initial state $\rho_A$ and target state $\rho'_A$, both block-diagonal, such that $F(\rho'_A)\geq F(\rho_A)$. Using a work bit $W$ with some energy gap $\Delta$ larger than, but arbitrarily close to $F(\rho'_A)-F(\rho_A)$, the transition
\[
   \rho_A\otimes\sigma_M\otimes |e\rangle\langle e|_W \mapsto \sigma_{AM}\otimes |g\rangle\langle g|_W
\]
can be achieved by a thermal operation, where $\sigma_A:={\rm Tr}_M\sigma_{AM}$ is arbitrarily close to $\rho'_A$.}

\emph{Similarly as in Main Result 1, the state $\sigma_M$ is exactly identical before and after the transformation, $M$ is finite-dimensional, and the resulting correlations between $A$ and $M$ can be made arbitrarily small.}\\
\end{mdframed}
The method to engineer this transition is very similar to that of Main Result 1, except for one important difference: since we are interested in producing a pure state $|g\rangle$ \emph{exactly}, we have to make sure that the min-free energy $F_0$, which depends only on the rank of the state, is non-increasing in the process. But this holds  automatically because
\[
   F_0(|e\rangle\langle e|_W) > F_0(|g\rangle\langle g|_W)
\]
if $\Delta>0$. Thus, the min-free energy introduces no new constraints in the case that we use work to form a state $\rho'_A$. The ``correlating work cost'' is given by the Helmholtz free energy difference $F(\rho'_A)-F(\rho_A)$.

\subsection{Correlating work extraction, and an open problem}
\label{SubsecCorrExtr}
Consider the converse situation: given an initial state $\rho_A$ and a target state $\rho'_A$ such that $F(\rho_A)\geq F(\rho'_A)$, we would like to extract work by transforming a work bit from its ground state $|g\rangle\langle g|_W$ to its excited state $|e\rangle\langle e|_W$. Here we encounter a problem: since $\rho_A$ will in general have full rank, the work bit alone lower-bounds the min-free energy difference of the corresponding transition, namely $\Delta F_0=F_0(|e\rangle\langle e|_W)-F_0(|g\rangle\langle g|_W)$, and this is a \emph{positive} amount if the energy gap $\Delta$ is positive.

Thus, unfortunately, the min-free energy condition $\Delta F_0\leq 0$ forbids this transition. If we still insist on producing the excited state exactly (for the reasons explained in Subsection~\ref{SubsecForm}), we need an additional resource: namely, a \emph{sink} $S$ for the corresponding entropy $S_0(\rho)=\log{\rm rank}(\rho)$, the ``max entropy''. A max entropy sink $S$ carries a trivial Hamiltonian, $H_S=0$, such that $S_0(\rho_S\|\gamma_S)=\log d_S-S_0(\rho_S)$, where $d_S$ is the Hilbert space dimension of $S$. Thus, we can extract min-free energy by dumping max entropy $S_0$ into $S$, which can be achieved by increasing the rank of the state of $S$. For example, if $S$ carries a state $\tau^{(m,n)}_S$ with eigenvalues $\left(\underbrace{\frac 1 m,\ldots, \frac 1 m}_m,\underbrace{0,\ldots,0}_{n-m}\right)$
and this state is transformed into a state $\tau^{(m,n,\varepsilon)}_S$ (for some small $\varepsilon>0$) with eigenvalues $\left(\underbrace{\frac{1-\varepsilon}m,\ldots,\frac{1-\varepsilon}m}_m,\underbrace{\frac\varepsilon{n-m},\ldots,\frac\varepsilon{n-m}}_{n-m}\right)$
then this extracts min-free energy $\Delta F_0=k_B T\log(n/m)$ from $S$. Since $\varepsilon>0$ can be arbitrarily close to zero, and $\Delta F_0$ does not depend on $\varepsilon$, this changes the physical state of $S$ by an arbitrarily small amount. Thus, we obtain the following:\\
\begin{mdframed}[backgroundcolor=shadecolor,innertopmargin=\topskip]
\textbf{Main Result 3.} \emph{Consider some initial state $\rho_A$ and target state $\rho'_A$, both block-diagonal, such that $F(\rho_A)\geq F(\rho'_A)$. Using a work bit with energy gap $\Delta$ smaller than, but arbitrarily close to $F(\rho_A)-F(\rho'_A)$, we can implement the following transition with a thermal operation, which extracts work $\Delta$ without any fluctuations:
\[
   \rho_A\otimes \sigma_M\otimes \tau_S^{(m,n)}\otimes |g\rangle\langle g|_W\mapsto \sigma_{AMS}\otimes |e\rangle\langle e|_W.
\]
Here $\sigma_M={\rm Tr}_{AS}\sigma_{AMS}$ remains identical during the transformation, $\sigma_S=\tau_S^{(m,n,\varepsilon)}$, and $\sigma_A$ is as close to $\rho'_A$ as we like. This can be achieved for any choice of $\varepsilon>0$, as long as $n/m$ is large enough.}\\
\end{mdframed}

Since the state of the max entropy sink $S$ remains almost unchanged, the agent may measure the state of the sink after the transition, by checking whether its configuration is one of the $(n-m)$ basis states which have probability zero in the initial state $\tau^{(m,n)}_S$. With probability $1-\varepsilon$, this will yield the answer ``no'' and restore the original state $\tau^{(m,n)}_S$ due to state updating. However, even if $\varepsilon>0$ is very small, a large number of repetitions of the thermodynamic cycle will eventually lead to failure of the protocol.

In other words, the case of work extraction suffers from a deficit that is not present in the case of formation of a state: it admits only a weaker notion of cyclicity. An additional max entropy sink is needed, and its state is not reset with unit probability after every cycle. It is well-known that allowing small deviations from cyclicity can lead to quite implausible and unphysical effects like \emph{embezzling} of work~\cite{vanDam,Brandao}. Thus, we consider Main Result 3 as only a preliminary answer to the question of the ultimate limits of work extraction in the setup of this paper. Note that the authors of~\cite{Brandao} use a similar construction to dismiss the $F_\alpha$-conditions for $\alpha<0$.

The main source of the problem is to insist on producing the excited state $|e\rangle$ exactly. If we allow that this state is only obtained approximately, \emph{and} possibly correlated with the system $M$, then we obtain a valid alternative to Main Result 3 without any max entropy sink (simply by applying Main Result 1). The problem is that correlations between $W$ and $M$ may potentially compromise the working of the machine in further cycles. This leads to the question whether it can be ensured that $W$ remains uncorrelated with all other systems even if we drop the condition that it is in an exact eigenstate:\\

\begin{mdframed}[backgroundcolor=shadecolor,innertopmargin=\topskip]
\textbf{Open Problem.} Can we formulate a suitable version of Main Result 3 which allows the state of the work bit to be slightly mixed (dropping the max entropy sink), but which ensures nevertheless that it remains perfectly uncorrelated with all other systems (in particular $M$)?

This should be achieved in a way that allows to accumulate work over many extraction cycles without degrading its ``quality'' (fidelity with an eigenstate) and without the need for increasing resources or precision.\\
\end{mdframed}

We conjecture that the answer is ``yes'', and that it will lead to the same expression for the amount of work that can be extracted in the correlating scenario of this paper as suggested by Main Result 3, namely $F(\rho_A)-F(\rho'_A)$. A possible approach could be to adopt the methods of~\cite{Woods}, and to consider quasistatic ``near perfect'' work extraction. 

The authors of Ref.~\cite{Sparaciari} have recently shown that work can be extracted from passive states if the thermal machine $M$ is allowed to become correlated with the system. However, only work extraction on average was considered (not fluctuation-free single-shot work extraction like in this paper), the extracted work was only modelled implicitly, without the demand that unitaries preserve the total energy, and no heat bath (and thus background temperature) was considered. Thus the Helmholtz free energy $F$ plays no role in~\cite{Sparaciari}.

\subsection{Sketch of proof}
\label{SecProof}
Before discussing the role of coherence in Subsection~\ref{SubsecEngineering} below, we now give a self-contained sketch of the proof of the main results. It is mostly based on majorization theory and can be skipped by readers who are only interested in the physical discussion. All proof details can be found in the appendix.

Given any quantum system $X$ (which may itself be composed of several quantum systems), a \emph{thermal operation} on $X$ is a map $\rho_X\mapsto\rho'_X$ such that there exists a finite-dimensional system $B$ with
\[
   \rho'_X=\tr_B\left[U_{XB}\left(\rho_X\otimes\gamma_B\right) U_{XB}^\dagger\right],
\]
where $[U_{XB},H_X+H_B]=0$ for $H_X$ and $H_B$ the Hamiltonians of $X$ and $B$, and $\gamma_B=\exp(-\beta H_B)/Z$ is the Gibbs state, with $\beta=1/(k_B T)$ and $Z$ the partition function such that $\tr\,\gamma_B=1$ (the temperature $T$ is arbitrary but fixed). Our main results claim  that certain state transitions on composite systems are or are not possible via thermal operations. We make use of two technical simplifications to prove these results.

First, since we are only considering states that are block-diagonal in energy eigenbasis (except for Subsection~\ref{SubsecEngineering}), we can represent quantum states $\rho_X$ as probability vectors, $p_X\in\R^m$, where $m=\dim X$ is the dimension of $X$'s Hilbert space, and the entries of $p_X$ are the occupation probabilities of the (ordered) energy levels. A Hamiltonian $H_X$ can then be represented as a vector $H_X=(E_1,\ldots,E_m)$ with energies $E_i$, and it is for many purposes sufficient to consider only unitaries $U$ which correspond to permutations of entries of the probability vector, chosen such that $H_X$ is left invariant. See~\cite{Gour} and~\cite{Scharlau} for mathematical details.

Second, there is a well-known technique to reduce the study of (block-diagonal) thermal operations to the case where all Hamiltonians of all involved physical systems $Y$ are trivial, $H_Y=0$. This is achieved via an ``embedding map'' $\Gamma$ which, intuitively, reformulates the canonical state on some space as a microcanonical state on another space. This technique has been introduced in~\cite{Brandao} and used e.g.\ in \cite{Lostaglio} and~\cite{BeyondFreeEnergy} (the latter reference contains a summary in its Methods section).

In this simplified situation of trivial Hamiltonians and block-diagonal states, it can be shown that a state $p_X$ on some system $X$ can be transformed into another state $p'_X$ to arbitrary accuracy by a thermal operation if and only if $p_X$ majorizes~\cite{RuchMead,MOA} $p'_X$,
\[
   p_X\succ p'_X,
\]
which is shorthand for
\[
   \sum_{i=1}^k p^\downarrow_i \geq \sum_{i=1}^k p'^\downarrow_i\quad\mbox{for all }k=1,\ldots,m,
\]
where $p^\downarrow=(p_1^\downarrow,\ldots,p_m^\downarrow)$ denotes the reordering of $p$ in non-increasing order, i.e.\ $p^\downarrow_i=p_{\pi(i)}$ for some permutation $\pi$ such that $p^\downarrow_1\geq p^\downarrow_2\geq\ldots\geq p^\downarrow_m$.
This prescription does not yet take into account the possibility of having an additional catalyst $c_M$ as in Figure~\ref{fig_sketch}. Demanding, as in Subsections~\ref{SubsecKnown} and~\ref{SubsecExample}, that the catalyst remains uncorrelated with the system, we are led to the question under what conditions there exists some probability vector $c_M$ such that
\begin{equation}
   p_X\otimes c_M\succ p'_X\otimes c_M.
   \label{eqMajor}
\end{equation}
This question has been answered in~\cite{Klimesh} and~\cite{Turgut}: suppose that $p^\downarrow_X\neq p'^\downarrow_X$ and at least one of them does not contain zeros. Then there exists some state $c_M$ such that~(\ref{eqMajor}) holds if and only if $H_\alpha(p)<H_\alpha(p')$ for all $\alpha\in\R\setminus\{0\}$, and $H_{\rm Burg}(p)<H_{\rm Burg}(p')$, where the R\'enyi entropies $H_\alpha$~\cite{Renyi} and the Burg entropy $H_{\rm Burg}$~\cite{Burg} are defined as
\begin{eqnarray*}
H_\alpha(p)&=&\frac {{\rm sgn}(\alpha)}{1-\alpha}\log\sum_{i=1}^m p_i^\alpha \qquad (\alpha\in\R\setminus\{0,1\}),\\
H(p)\equiv H_1(p)&=&-\sum_{i=1}^m p_i\log p_i,\enspace H_{\rm Burg}(p)=\frac 1 m\sum_{i=1}^m \log p_i
\end{eqnarray*}
with $m=\dim X$ and ${\rm sgn}(\alpha)=+1$ if $\alpha>0$ and $-1$ if $\alpha<0$. Inverting the embedding $\Gamma$, allowing arbitrarily small errors in the production of the target state, and investing a tiny amount of extra work~\cite{Brandao} leads to condition~(\ref{egDeltaFAlpha}) for thermal transitions of the form~(\ref{eqCatalytic}), i.e.\ $\Delta F_\alpha\leq 0$ for all $\alpha$-free energies with $\alpha>0$.

The crucial step for establishing Main Results 1--3 is the following theorem that we prove in detail in the Appendix:\\

\begin{mdframed}[backgroundcolor=shadecolor,innertopmargin=\topskip]
\textbf{Main Theorem.} 
Let $p,p'\in\R^m$ be probability distributions with $p^\downarrow\neq p'^\downarrow$. Then there exists an extension $p'_{XY}$ of $p'\equiv p'_X$ such that
\begin{equation}
   p_X\otimes p'_Y \succ p'_{XY}
   \label{eqMajCorr}
\end{equation}
if and only if $H_0(p)\leq H_0(p')$ and $H(p)<H(p')$. Moreover, for every $\varepsilon>0$, we can choose $Y$ and $p'_{XY}$ such that the mutual information is $I(X:Y)\equiv S(p'_{XY}\|p'_X\otimes p'_Y)<\varepsilon$.\\
\end{mdframed}

The statement of this theorem uses the \emph{max entropy} (or \emph{Hartley entropy}) $H_0(p)=\log\#\{i\,\,|\,\, p_i\neq 0\}$, with its quantum version (also used in the main text) $S_0(\rho)=\log {\rm rank}(\rho)$, and it uses the notion of an ``extension'' of a probability distribution $p'$. To this end, we label the system on which $p'$ lives by $X$, and introduce another (discrete) system $Y$. An extension of $p'$ is then a joint probability distribution $p'_{XY}$ on the composite system $XY$ such that its marginal on $X$ equals $p'_X$. The mutual information $I(\bullet:\bullet)$ and relative entropy $S(\bullet\|\bullet)$ are defined in the Appendix. An interesting consequence is that, due to the Pinsker inequality~\cite{BZ},
\[
   \|p'_{XY}-p'_X\otimes p'_Y\|\leq\sqrt{I(X:Y)/2}<\sqrt{\varepsilon/2},
\]
where $\|p-q\|:=\frac 1 2 \sum_{i=1}^m |p_i-q_i|$ is the \emph{trace distance}, or \emph{variation distance}, which quantifies the distinguishability of $p$ and $q$~\cite{NC}. This means that $p'_{XY}$ can be as indistinguishable from a product state as we like, which is arguably the operationally strongest possible form of ``containing almost no correlations''.

Using the subadditivity~\cite{Aczel} of $H_0$ and $H=H_1$, it is very easy to see that $H_i(p)\leq H_i(p')$ for $i=0,1$ is necessary for the existence of some $p'_{XY}$ which satisfies~(\ref{eqMajCorr}). The hard part is to show that it is sufficient. To show this, we construct an explicit extension $p'_{XY}$ of $p'_X$ that satisfies~(\ref{eqMajCorr}). This is done in two steps. First, we introduce an auxiliary system $Y_1$ and an extension $p'_{XY_1}$ of $p'_X$ such that
\begin{eqnarray}
   H_\alpha(p_X\otimes p'_{Y_1})&<& H_\alpha(p'_{X Y_1})\quad\mbox{for all }\alpha\in\R\setminus\{0\},\nonumber\\
   H_{\rm Burg}(p_X\otimes p'_{Y_1})&<&H_{\rm Burg}(p'_{X Y_1}).\label{eqConditions}
\end{eqnarray}
The results of~\cite{Klimesh,Turgut} explained above will then guarantee that there is yet another auxiliary system $Y_2$ with a probability distribution $c_{Y_2}$ such that
\[
   p_X\otimes p'_{Y_1}\otimes c_{Y_2} \succ p'_{X Y_1}\otimes c_{Y_2},
\]
and we can simply define $Y:=Y_1 Y_2$ and $p'_{XY}:=p'_{X Y_1}\otimes c_{Y_2}$.

The extension $p'_{X Y_1}$ is explicitly defined in Figure~\ref{fig_graph}. While we can represent probability distributions $p_X$ on a system $X$ as vectors $p=(p_1,\ldots,p_m)\in\R^m$, we can similarly represent bipartite probability distributions $p_{XY_1}$ as matrices $p_{ij}$, like we do for $p'_{X Y_1}$ in Figure~\ref{fig_graph}. Summing over the rows resp.\ columns gives the marginals $p'_X=(p'_1,\ldots,p'_m)$ and $p'_{Y_1}$, which shows in particular that $p'_{XY_1}$ is indeed an extension of $p'_X$. We choose $Y_1$ to be $(n^2+n+1)$-dimensional, whereas $X$ is $m$-dimensional.

\begin{figure}
\begin{center}
\includegraphics[width=.4\textwidth]{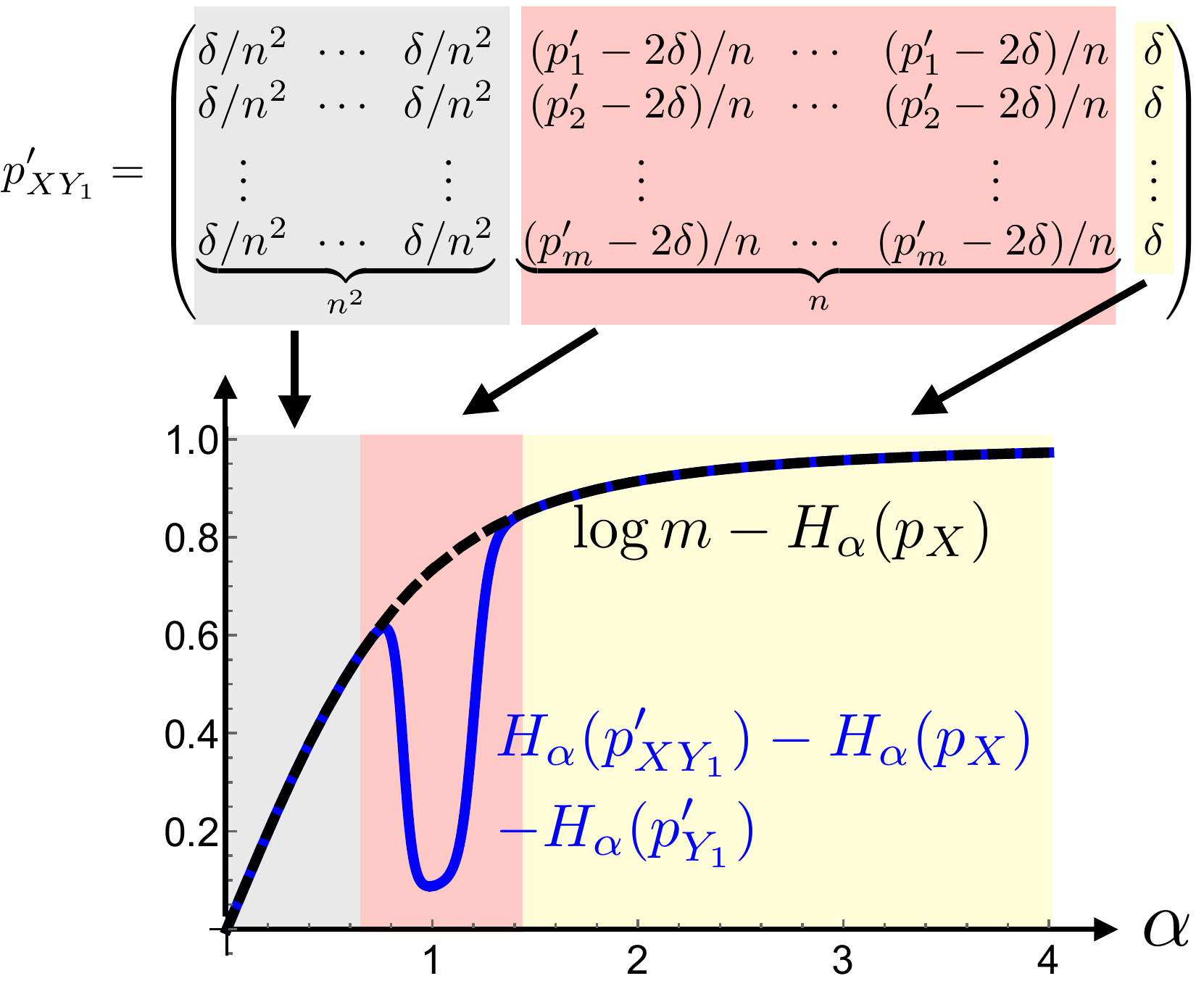}
\end{center}
\caption{\small The extension $p'_{XY_1}$ of $p'_X$ that is used in the main text to establish sufficiency of the entropy conditions in the Main Theorem. According to~(\ref{eqConditions}), the goal is to build an extension such that the blue curve (that is, the $\alpha$-R\'enyi entropy balance) attains only positive values. The plot is for $m=3$, $\delta=10^{-3}$, $p=p_X=(\frac{91}{100},\frac 1 {20},\frac 1 {25})$, $Y_1=\mathbb{R}^{n^2+n+1}$ with $n=10^{15}$ and $p'=p'_X=(\frac{17}{20},\frac{7}{50},\frac{1}{100})$. Since $H_\alpha(p)>H_\alpha(p')$ for $0<\alpha\leq \frac 1 3$, there does not exist $c_M$ such that~(\ref{eqMajor}) holds true, i.e.\ no standard catalytic thermal operation can transform $p$ into $p'$. Nevertheless, the transition can be achieved by a correlating catalytic thermal operation. The shaded colors show how different entries of $p'_{XY_1}$ are responsible for (the positivity of) different parts of the curve, as explained in the main text. In the limit $n\to\infty$, only positivity at $\alpha=1$, i.e.\ positive balance of Shannon entropy, remains as a necessary condition.}\label{fig_graph}
\end{figure}

Let us consider the special case that $p'_X$ does not contain zeros (implying $H_0(p)\leq H_0(p')$) and that $p_X\neq (\frac 1 m,\ldots,\frac 1 m)$. Suppose that $H(p)<H(p')$. We claim that for all $\alpha\neq 1$,
\[
   \lim_{n\to\infty} H_\alpha(p'_{X Y_1})-H_\alpha(p_X\otimes p'_{Y_1})=\log m-H_\alpha(p_X),
\]
which can be seen in Figure~\ref{fig_graph} by the fact that the left-hand side (the blue curve) approaches the right-hand side (the black dashed curve) for large $n$. In fact, the blue curve is monotonically increasing in $n$ towards the black curve. Since the maximal value of $H_\alpha(p_X)$ is $\log m$, and this is only attained at the uniform distribution, this shows that the blue curve attains strictly positive values away from $\alpha\in\{0,1\}$ if $n$ is large enough. According to the first condition in~(\ref{eqConditions}), this is exactly what we need to achieve.

We can understand why this happens by considering the different intervals of $\alpha$ separately. It turns out that the R\'enyi entropies $H_\alpha$ in the regime $\alpha>1$ are dominated by the largest elements of a probability distribution, which, in this case, are the $\delta$-entries (shaded yellow); all other entries do not contribute much to the value of $H_\alpha$. Since those entries are all equal, the expression $H_\alpha(p'_{XY_1})-H_\alpha(p'_{Y_1})$ reduces in the limit $n\to\infty$ to $\log m$. On the other hand, for $\alpha<1$, is is the \emph{smallest} entries of the probability distributions that matter, which are the $(\delta/n^2)$-entries (shaded grey), leading to the same conclusion. In fact, this intuition has been used in quantum information theory in the construction of counterexamples to certain versions of the so-called additivity conjecture~\cite{Hayden,Cubitt,Hastings}.

In contrast, for $\alpha=1$, the difference of entropies is \emph{constant} in $n$ and satisfies
\[
   \lim_{\delta\searrow 0} H_1(p'_{XY_1})-H_1(p_X\otimes p'_{Y_1})=H(p')-H(p).
\]
This explains why the blue curve in Figure~\ref{fig_graph} has an $n$-independent ``dip'' at $\alpha=1$: the value there differs in the limit from those at $\alpha<1$ and $\alpha>1$. Thus, the dip becomes very narrow as $n$ tends to infinity. The blue curve takes values at $\alpha\neq 1$ which are in the limit positive and \emph{independent} of the target state $p'_X$ and its extension $p'_{XY_1}$; it is only at $\alpha=1$ where the value depends on that state and its extension. If we choose $\delta$ small enough, we can enforce that the blue curve remains positive also around $\alpha=1$ if and only if $H(p')-H(p)>0$ --- that is, positivity of the standard Shannon entropy difference survives as the unique condition. One can show that the Burg entropy is related to the derivative of the blue curve at $\alpha=0$, and the second condition in~(\ref{eqConditions}) is automatically satisfied too, which establishes the first part of the Main Theorem. All remaining details of the proof are given in the Appendix.

Main Result 1 is then established by using an inverse of the embedding map $\Gamma$, as explained above. The proofs of Main Results 2 and 3 are very similar, except that some care has to be taken that all approximations (which are unavoidable due to the construction of $\Gamma$~\cite{Brandao}) are chosen without spoiling the purity of the work bit $W$. These results have thus independent (but very similar) proofs.

As we also show in the Appendix, a simple consequence of the result above is a resolution of an open problem in~\cite{MuellerPastena}: in the notation of that paper, it follows that c-trumping for $k=2$ is equivalent to c-trumping for $k\geq 3$.\\

\begin{mdframed}[backgroundcolor=shadecolor,innertopmargin=\topskip]
\textbf{Theorem~\ref{The2Trumping} (cf.\ Appendix).}
\emph{Let $p,q\in\R^m$ be probability distributions with $p\neq q$. Then there exist auxiliary systems $B,C$ and a bipartite distribution $r_{BC}$ such that}
\[
   p_A\otimes(r_B\otimes r_C)\succ q_A\otimes r_{BC}
\]
\emph{if and only if $H_0(p)\leq H_0(q)$ and $H(p)<H(q)$. Here, $r_B$ and $r_C$ denote the marginals of $r_{BC}$.}\\
\end{mdframed}
This also shows that $k=2$ systems are enough to use stochastic independence as a resource as described in~\cite{Lostaglio}, not only $k\geq 3$. We briefly comment on the relation between the present work and~\cite{MuellerPastena} after Theorem~\ref{The2Trumping} in the Appendix.

\subsection{Correlation and coherence?}
\label{SubsecEngineering}
So far, our discussion has focused on block-diagonal states, i.e.\ states that commute with the total Hamiltonian. In quantum thermodynamics, it is standard to consider this situation first, since transitions between states with coherence are much harder to characterize~\cite{BeyondFreeEnergy,LostaglioX,Cwiklinski}. In fact, the generic situation is that classification results for block-diagonal states fail to hold in the presence of coherence~\cite{Perarnau}, such as the equivalence of Gibbs-preserving and thermal operations~\cite{FOR}.

It is thus remarkable that the result of this paper has potentially a chance to hold in the presence of coherence as well:\\
\begin{mdframed}[backgroundcolor=shadecolor,innertopmargin=\topskip]
\textbf{Conjecture.} Main Result 1 remains true also in the case that $\rho_A$ and/or $\rho'_A$ are not block-diagonal, i.e.\ in the presence of quantum coherence.\\
\end{mdframed}
At first sight this may seem implausible: if, for example, $\rho'_A=\sigma_A$ is a pure state, $\sigma_{AM}$ must be a product state, and so the transition in Main Result 1 will simplify to
\begin{equation}
   \rho_A\otimes\sigma_M\mapsto \rho'_A\otimes\sigma_M,
   \label{eqCatalyticCorr2}
\end{equation}
which is just a standard catalytic thermal transition as discussed in Subsection~\ref{SubsecKnown}, subject to the family of ``second laws'' $\Delta F_\alpha\leq 0$ (not just $\Delta F\leq 0$). But this  ignores that we are in general only interested in producing the target state $\rho'_A$ \emph{approximately} (though to arbitrary accurary), such that $\sigma_A\approx\rho'_A$ may in general still be a mixed state, undermining the above counterargument.

If $\rho_A$ is incoherent and $\rho'_A$ is not, then a simple argument shows that transitions of the form~(\ref{eqCatalyticCorr2}) are impossible. Following~\cite{JanzingBeth}, define the \emph{quantum Fisher information} for a system with Hamiltonian $H$ and state $\rho$ as $I(\rho,H):={\rm tr}(\dot\rho \Delta_{\rho}^{-1} \dot\rho)$, where $\dot \rho:=i[\rho,H]$ and $\Delta_\rho X:=(\rho X + X \rho)/2$. Then $I(\rho,H)=0$ if and only if $\rho$ is incoherent. Moreover, $I$ is additive on tensor products, and $\rho\to\sigma$ by a thermal operation implies $I(\rho)\geq I(\sigma)$, since thermal operations are covariant. Applying these properties to~(\ref{eqCatalyticCorr2}) tells us that $I(\rho_A)\geq I(\rho'_A)$, i.e.\ if $\rho_A$ is block-diagonal then so is $\rho'_A$.

However, this kind of reasoning cannot be used to rule out Main Result 1: in general, it may hold $I(\sigma_A)+I(\sigma_M)> I(\sigma_{AM})$, and in this sense, correlations can increase the total amount of coherence as summed over all subsystems. This phenomenon is also at the heart of \r{A}berg's result~\cite{AbergCC} which gives us further evidence for the conjecture above. While \r{A}berg's setting is different from the one in this paper (his catalyst changes its state during every operation, and, in particular, is infinite-dimensional, thus exceeding the strict notion of cyclicity that we have adopted here --- similar comments apply to the improved results by Korzekwa et al.~\cite{Korzekwa}), his setup allows to ``broadcast'' coherence in some sense indefinitely catalytically, \emph{while correlating the catalyst with the systems on which it acts}, pretty much in the same way as in this paper. It has been noted that this comes at the prize of correlating the systems on which the catalyst successively acts~\cite{Bedkihal}. Therefore, the conjecture above blends into a series of questions about how to best use coherence catalytically~\cite{CirstoiuJennings}. We leave the resolution of this conjecture to future work.

\section{Conclusions}
It has been argued in~\cite{Brandao} that the Helmholtz free energy loses its role as the unique indicator of state transitions in small-scale thermodynamics. Instead, an infinite family of ``$\alpha$-free energies'' takes its place. It has been noted that this implies in particular that there is an inherent irreversibility at the nanoscale: while it takes $F_\infty(\rho)+k_B T \log Z$ to create a state $\rho$, only work $F_0(\rho)+k_B T \log Z$ can be extracted if one is given one copy of $\rho$, where in general $F_0<F_\infty$. But these results have been obtained under the assumption that the corresponding thermal machine remains uncorrelated from the systems on which it acts. In this paper, we have argued that this restriction can be lifted in many situations, and we have shown that this restores the distinguished role of the Helmholtz free energy $F$. Moreover, work extraction and formation at the free energy difference can be achieved without any fluctuations, up to a minor tweak in the extraction case.

Does this mean that we have restored reversibility at the nanoscale? Not quite. An interesting perspective to take is that this irreversibility has simply been shifted, from work to correlations. That is, while work cost and extractable work are now both equal to $F(\rho)$ (up to the Open Problem of Subsection~\ref{SubsecCorrExtr}), a new form of irreversibility has appeared: namely, initially uncorrelated systems (for example, $A$ and $M$) become correlated. It is interesting to see that this brings us closer to discussions of the founding days of thermodynamics: Boltzmann's H-theorem~\cite{Boltzmann}, for example, derives the non-decrease of entropy in a gas from the assumption that the velocities of molecules are initially uncorrelated (i.e.\ factorize), but they become correlated after a collision (``Sto\ss zahlansatz''). This introduces naturally an ``arrow of time'', and the fluctuation-free single-shot work formation and extraction in the present paper comes at the prize of introducing an analogous ``aging'' to the physical systems, with ``wrinkles'' given by correlations.

We emphasize that the results of this paper are not primarily meant as a criticism of earlier work. The point is not that it would be ``wrong'' to demand that the catalyst is returned uncorrelated (as in~(\ref{eqCatalyticCorr2})), but rather that the thermodynamic task of state conversion, when considered at the nanoscale, comes in two different versions: one version applicable to situations in which the machine acts on the \emph{same} system multiple times, such that the catalyst must be returned uncorrelated; and a second version, in which the machine acts on \emph{many different} quantum systems individually (and on each only once), in which case correlations are allowed to persist. The good (and arguably surprising) news of the present work is that the latter case is particularly simple to characterize, namely in terms of the free energy $F$ alone. The question of which version to choose depends entirely on the physical context.

The results of this paper open up a multitude of interesting open problems. First, does Main Result 1 remain true in the presence of coherence? Can we reformulate the work extraction result (Main Result 3) without a max entropy sink (Open Problem in Subsection~\ref{SubsecCorrExtr})? And finally, do machines that operate in this correlating-catalytic way have any realization in nature?

\section*{Acknowledgments}
I am grateful to Jonathan Oppenheim, Robert W.\ Spekkens, Henrik Wilming, and Mischa Woods for discussions, and in particular to Matteo Lostaglio for helpful discussions on coherence measures. This research was undertaken, in part, thanks to funding from the Canada Research Chairs program. This research was supported in part by Perimeter Institute for Theoretical Physics. Research at Perimeter Institute is supported by the Government of Canada through the Department of Innovation, Science and Economic Development Canada and by the Province of Ontario through the Ministry of Research, Innovation and Science.

\onecolumngrid
\section*{Appendix}
\subsection{Mathematical preliminaries}
In this paper, any ``probability distribution'' (or just ``distribution'') is assumed to be discrete, i.e.\ is a vector $p\in\R^m$ for some $m\in\N$ such that $p=(p_1,\ldots,p_m)$, $p_i\geq 0$, $\sum_{i=1}^m p_i=1$. We interpret it as a probability distribution on the discrete sample space $\{1,2,\ldots,m\}$, and we will usually denote the corresponding probability space by an uppercase letter like $A$, following quantum information terminology, writing $p\equiv p_A$. Given two probability spaces (``systems'') $A$ and $B$, we can consider the composite probability space $AB$ with a sample space that is the direct product of the two sample spaces. Independent product distributions will then be represented by vectors $p_A\otimes q_B$, and we can write joint probability distributions $q\equiv q_{AB}$ in matrix form, by collecting the probabilities $q_{AB}(i,j)$ into a table. Summing over the rows resp.\ columns of this matrix will give the marginal distributions on $A$ resp.\ $B$. We will sometimes slightly abuse notation and use uppercase letters like $A$ also as placeholders for the vector space $\R^m$ that contains its probability distributions, writing for example $p\in A$ instead of $p\in\R^m$. This improves clarity in cases where there is more than one system with sample space $\{1,2,\ldots,m\}$. Moreover, probability distributions will sometimes be called ``states'', again following quantum information terminology.

We define the notions of \emph{majorization}~\cite{MOA} and \emph{$\alpha$-R\'enyi entropies} $H_\alpha$ as well as \emph{Burg entropy} $H_{\rm Burg}$ as described in Subsection~\ref{SecProof}. A \emph{stochastic map} is a linear map $\Lambda:A\to A$ that maps probability distributions to probability distributions. A stochastic map is \emph{bistochastic} if it preserves the uniform distribution $\mu=(\frac 1 m,\ldots,\frac 1 m)\in\R^m$, i.e.\ $\Lambda(\mu)=\mu$. It is well-known that $p\succ q$ is equivalent to the existence of a bistochastic map $\Lambda$ such that $\Lambda(p)=q$~\cite{MOA}. Following~\cite{Nielsen,JonathanPlenio}, we say that a distribution $p_A$ \emph{trumps} another distribution $q_A$, denoted $p\succ_T q$, if there exists another (finite discrete) system $B$ and a distribution $c_B$ such that
\[
   p_A\otimes c_B \succ q_A\otimes c_B.
\]
As explained in Subsection~\ref{SecProof}, the relation $p\succ_T q$ for $p^\downarrow \neq q^\downarrow$ is equivalent to $H_\alpha(p)<H_\alpha(q)$ for all $\alpha\in\R\setminus\{0\}$ and $H_{\rm Burg}(p)<H_{\rm Burg}(q)$, which was proven in~\cite{Klimesh,Turgut}.

We use the trace norm (or trace distance~\cite{NC})
\[
   \|a\|:=\frac 1 2 \sum_{i=1}^m |a_i|,\qquad a=(a_1,\ldots,a_m)	\in\R^m.
\]
Stochastic maps $\Lambda$ do not increase the trace norm, i.e.\ $\|\Lambda(a)\|\leq \|a\|$ for all $a\in\R^m$.

Following~\cite{Brandao} (see also~\cite{vanErven}) we define the \emph{R\'enyi divergences}, or \emph{relative R\'enyi entropies}, for distributions $p,q\in\R^m$ as
\[
   S_\alpha(p\|q):=\frac{{\rm sgn}^+(\alpha)}{\alpha-1} \log \sum_{i=1}^m p_i^\alpha q_i^{1-\alpha} \qquad(\alpha\in\R\setminus\{0,1\}),
\]
where
\[
   {\rm sgn}^+(\alpha)=\left\{
      \begin{array}{cl}
      	   +1 & \mbox{if }\alpha\in[0,+\infty]\\
      	   -1 & \mbox{if }\alpha\in[-\infty,0).
      \end{array}
   \right.
\]
For $\alpha\in\{-\infty,0,1,\infty\}$, we use the definitions~\cite{Brandao}
\begin{eqnarray*}
   S_0(p\|q)=\lim_{\alpha\searrow 0} S_\alpha (p\|q)=-\log \sum_{i: p_i\neq 0} q_i, \qquad S_1(p\|q)\equiv S(p\|q) =\lim_{\alpha\to 1}S_\alpha(p\|q)=\sum_{i=1}^m p_i(\log p_i - \log q_i),\\
   S_\infty (p\|q)= \lim_{\alpha\to\infty} S_\alpha(p\|q)=\log \max_i \frac {p_i}{q_i}, \qquad
   S_{-\infty}(p\|q)=\lim_{\alpha\to-\infty} S_\alpha(p\|q)=S_\infty (q\|p).
\end{eqnarray*}

We will always assume that there is a fixed ``background inverse temperature'' $\beta>0$, and we will use the definition $k_B T:=1/\beta$, where we interpret $T$ as the temperature and $k_B$ as the Boltzmann constant. The $\alpha$-free energies $F_\alpha$ are defined as~\cite{Brandao}
\[
   F_\alpha(p):=-k_B T \log Z+ k_B T S_\alpha(p\|\gamma), \qquad F(p):=F_1(p),
\]
where $p\in A$ is any state, $Z\equiv Z_A=\sum_{i=1}^m \exp(-\beta E_i)$ is the partition function with $H_A=(E_1,\ldots E_m)$ the Hamiltonian (which, as described in Subsection~\ref{SecProof}, is now a vector with the energy levels as entries), and $\gamma=(\gamma_1,\ldots,\gamma_m)$ with $\gamma_i=\exp(-\beta E_i)/Z$ the thermal state (or Gibbs state).

Recall the definition of a thermal operation in Figure~\ref{fig_sketch}, but in the special case that the system $M$ is trivial, i.e.\ $AM=A$. If all states are block-diagonal, then we have a ``classical'' version of a thermal operation, acting effectively on classical probability distributions. If $p,q\in A$ are probability distributions, we can ask under what conditions a thermal operation can map the quantum state ${\rm diag}(p)$ to ${\rm diag}(q)$. This question was answered in~\cite{Janzing}, see also~\cite{HHO,Ruch,Scharlau}: this transition is possible to arbitrary accuracy if and only if there exists a stochastic map $\Lambda$ with
\[
   \Lambda(p)=q\quad\mbox{and}\quad \Lambda(\gamma_A)=\gamma_A
\]
(actually, in many but not all cases, the target state $q$ can be produced exactly by a thermal operation, i.e.\ with perfect accuracy, as discussed in~\cite{Scharlau}). Therefore, the existence of a thermal operation that maps one block-diagonal state to another can be shown by constructing a corresponding ``Gibbs-preserving'' stochastic map which maps the initial to the final distribution. The main result of~\cite{Brandao} was to give a criterion for the existence of a stochastic map $\Lambda$ with the above properties: basically (for details see~\cite{Brandao}), $F_\alpha(p)\geq F_\alpha(q)$ for all $\alpha$ is sufficient and necessary for the existence of such a map (we will not use this result directly in what follows).

\subsection{Results for trivial Hamiltonian}
As explained in the main text, we will in the following consider a particular family of bipartite probability distributions. For any given probability distribution $q\equiv q_A=(q_1,\ldots,q_m)\in\R^m$ with $q_i\neq 0$ for all $i$, we define the extension
\begin{equation}
	q_{AB}:=
	    \begin{pmatrix}
	       \delta & \delta/n^2 & \cdots & \delta/n^2 & (q_1-2\delta)/n & \cdots & (q_1-2\delta)/n \\
	       \delta & \delta/n^2 & \cdots & \delta/n^2 & (q_2-2\delta)/n & \cdots & (q_2-2\delta)/n \\
	       \vdots & \vdots &  & \vdots & \vdots &  & \vdots \\
	       \delta & \delta/n^2 & \sunderb{6.6em}{n^2} & \delta/n^2 & (q_m-2\delta)/n & \sunderb{13.4em}{n} & (q_m-2\delta)/n
	    \end{pmatrix}
	    \label{eqQAB}
\end{equation}
where $n\in\N$ and $0<\delta<\frac 1 2 \min_i q_i$. This is an $m\times(n^2+n+1)$-matrix with strictly positive entries which defines a joint probability distribution on $AB$. Summing over the rows shows that it has $q$ as its marginal on $A$. Its marginal on $B$ is
\[
   q_B=\left(
      m\delta,\underbrace{\frac{m\delta}{n^2},\ldots,\frac{m\delta}{n^2}}_{n^2},\underbrace{\frac{1-2m\delta}n,\ldots,\frac{1-2m\delta}n}_n
   \right).
\]
By direct computation, it turns out that the mutual information in $q_{AB}$ is independent of $n$:
\begin{equation}
   I(A:B)=S(q_{AB}\|q_A\otimes q_B)=\sum_{i=1}^m (q_i-2\delta)\log(q_i-2\delta)-\sum_{i=1}^m q_i \log q_i -2m\delta\log m -(1-2m\delta)\log(1-2m\delta),
   \label{eqMutualInformation}
\end{equation}
and we have in particular $\lim_{\delta\searrow 0}I(A:B)=0$.

\begin{lemma}
\label{LemBasic}
Let $p,q\in\R^m$ be probability distributions with full rank such that $H(p)<H(q)$. Then, for every $\varepsilon>0$, there exist $\delta>0$ with $\delta<\frac 1 2 \min_i q_i$ and $n\in\N$ such that $q_{AB}$ as defined in~(\ref{eqQAB}) satisfies
\[
   p_A\otimes q_B \succ_T q_{AB}
\]
as well as $I(A:B)\equiv S(q_{AB}\|q_A\otimes q_B)<\varepsilon$.
\end{lemma}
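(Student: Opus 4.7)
The plan is to verify the Klimesh--Turgut trumping criterion for the pair $(p_A\otimes q_B,\,q_{AB})$: namely $(p\otimes q_B)^\downarrow\neq q_{AB}^\downarrow$, $H_\alpha(p\otimes q_B)<H_\alpha(q_{AB})$ for every $\alpha\in\R\setminus\{0\}$, and $H_{\rm Burg}(p\otimes q_B)<H_{\rm Burg}(q_{AB})$. By additivity of $H_\alpha$ and $H_{\rm Burg}$ on product distributions these reduce to
\[
H_\alpha(p)<H_\alpha(q_{AB})-H_\alpha(q_B),\qquad H_{\rm Burg}(p)<H_{\rm Burg}(q_{AB})-H_{\rm Burg}(q_B),
\]
while the mutual information bound $I(A:B)<\varepsilon$ is handled by formula (\ref{eqMutualInformation}), which is $n$-independent and vanishes as $\delta\to 0$.

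First I would compute $\sum_{ij}q_{AB}(i,j)^\alpha$ and $\sum_j q_B(j)^\alpha$ block by block from (\ref{eqQAB}): each sum splits into three contributions with $n$-dependence $n^0$, $n^{2-2\alpha}$, $n^{1-\alpha}$. Identifying the leading term in each regime shows that $\sum q_{AB}^\alpha/\sum q_B^\alpha\to m^{1-\alpha}$ as $n\to\infty$, which after applying $\frac{{\rm sgn}(\alpha)}{1-\alpha}\log(\cdot)$ yields
\[
\lim_{n\to\infty}\bigl[H_\alpha(q_{AB})-H_\alpha(q_B)\bigr]={\rm sgn}(\alpha)\log m\qquad(\alpha\in\R\setminus\{0,1\}).
\]
An analogous direct computation gives $H_{\rm Burg}(q_{AB})-H_{\rm Burg}(q_B)\to -\log m$. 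For $\alpha=1$ the expression is already $n$-independent: $H_1(q_{AB})-H_1(q_B)=2m\delta\log m+(1-2m\delta)H(\tilde q_\delta)$ with $\tilde q_\delta(i):=(q_i-2\delta)/(1-2m\delta)$, and this tends to $H(q)$ as $\delta\to 0$. Because $H(p)<H(q)\leq\log m$, the distribution $p$ is strictly non-uniform, which forces $H_\alpha(p)<{\rm sgn}(\alpha)\log m$ for every $\alpha\in\R\setminus\{0\}$ and $H_{\rm Burg}(p)<-\log m$, all strict.

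Second, I would choose $\delta>0$ small enough that $I(A:B)<\varepsilon$ and that the Shannon gap $g(1):=H_1(q_{AB})-H_1(q_B)-H_1(p)$ is strictly positive, and then choose $n$ large enough that the gap $g_n(\alpha):=H_\alpha(q_{AB})-H_\alpha(q_B)-H_\alpha(p)$ is positive for every $\alpha\in\R\setminus\{0\}$. On any compact subset of $\R\setminus\{0,1\}$ the family $g_n$ converges uniformly to a continuous, strictly positive limit, so $g_n>0$ there for large $n$. The tails $|\alpha|\to\infty$ are controlled by the explicit values $H_\infty(q_{AB})-H_\infty(q_B)=-\log\delta+\log(m\delta)=\log m$ and $H_{-\infty}(q_{AB})-H_{-\infty}(q_B)=\log(\delta/n^2)-\log(m\delta/n^2)=-\log m$, together with continuity of $H_\alpha$ at $\pm\infty$ and the non-uniformity bounds $H_\infty(p)<\log m$, $H_{-\infty}(p)<-\log m$. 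Near $\alpha=1$, continuity of $H_\alpha$ in $\alpha$ combined with $g_n(1)=g(1)>0$ extends positivity to a small interval around $\alpha=1$.

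The main obstacle, and the heart of the argument, is the boundary $\alpha=0$: the quantity $g_n$ vanishes identically at $\alpha=0^\pm$ because the $H_0$ contributions cancel (all three distributions have full support), so positivity on a punctured neighbourhood of $0$ must come from the first-order Taylor term. Using the one-sided expansion $H_\alpha(r)={\rm sgn}(\alpha)\bigl[H_0(r)+\alpha\bigl(H_0(r)+H_{\rm Burg}(r)\bigr)\bigr]+O(\alpha^2)$ valid for full-support $r$ as $\alpha\to 0^\pm$, the $H_0$ contributions again cancel in $g_n$, and the coefficient of $|\alpha|$ is precisely $H_{\rm Burg}(q_{AB})-H_{\rm Burg}(q_B)-H_{\rm Burg}(p)$, which is strictly positive for large $n$ by the Burg limit established above. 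Hence $g_n>0$ on a small punctured neighbourhood of $0$. Combining the compact-interval, tail, near-$1$ and near-$0$ estimates produces a single $n$ for which every Klimesh--Turgut inequality holds simultaneously; the non-equality of ordered vectors then follows, e.g., from the strict Shannon inequality. The decisive role of the Burg entropy in controlling the $\alpha\to 0$ regime is precisely why the Klimesh--Turgut characterization must include it as a separate condition.
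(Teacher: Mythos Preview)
Your overall strategy matches the paper's: verify the Klimesh--Turgut criterion by studying the gap
\[
g_n(\alpha)=H_\alpha(q_{AB})-H_\alpha(q_B)-H_\alpha(p),
\]
compute its pointwise limit ${\rm sgn}(\alpha)\log m-H_\alpha(p)$ for $\alpha\notin\{0,1\}$, and use non-uniformity of $p$ to see that this limit is strictly positive. Your identification of the $\alpha=1$ value, the Burg limit, and the $\pm\infty$ endpoints is correct, and the mutual-information step via~(\ref{eqMutualInformation}) is exactly the paper's.

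There is, however, a genuine gap in how you glue the regimes together. Near $\alpha=1$ you invoke ``continuity of $H_\alpha$ in $\alpha$'' to get positivity on a small interval, and near $\alpha=0$ you rely on a first-order Taylor expansion. Both yield intervals that a priori depend on $n$. Concretely, the second-order remainder in your expansion at $\alpha=0$ involves averages of $(\log q_{AB}(i,j))^2$, and the $\delta/n^2$ block contributes terms of size $(\log n)^2$; after the partial cancellation with $q_B$ the remainder is still of order $\log n$, so the interval on which the linear (Burg) term dominates shrinks as $n$ grows. Similarly, the ``small interval around $\alpha=1$'' furnished by continuity of $g_n$ is $n$-dependent, because the limit function $\alpha\mapsto{\rm sgn}(\alpha)\log m-H_\alpha(p)$ is \emph{discontinuous} at $\alpha=1$ relative to the actual value $g_n(1)\approx H(q)-H(p)$ (this is precisely the $n$-independent ``dip'' of Figure~\ref{fig_graph}). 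You therefore cannot first fix the two neighbourhoods, then cover their complement by a compact set, and then pick a single large $n$; and your asserted uniform convergence only holds on compacta \emph{bounded away} from $0$ and $1$.

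The paper fills exactly this gap by proving that $\Delta_n^{(\alpha)}$ is \emph{monotone nondecreasing in $n$} for every $\alpha\neq 0$, via the explicit formula~(\ref{eqDelta}) and a sign analysis of the derivative of the auxiliary function $f$ in~(\ref{eqF}), case by case in $\alpha$. Monotonicity does two things you need. First, continuity of $\Delta_{n=1}^{(\alpha)}$ at $\alpha=1$ gives a \emph{fixed} interval $[1-\varepsilon,1+\varepsilon]$ on which $\Delta_1>0$, and monotonicity propagates this to all $n\geq 1$. Second, pointwise convergence becomes monotone convergence, so Dini's theorem applies on the remaining compact pieces---including, via the rescaling $\bar\Delta_n^{(\alpha)}=\frac{1-\alpha}{|\alpha|}\Delta_n^{(\alpha)}$ (set equal to $\Delta_n^{\rm Burg}$ at $\alpha=0$), a compact interval \emph{containing} $\alpha=0$. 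This monotonicity-and-Dini step is the missing ingredient in your argument; once added, the rest of your outline goes through essentially as written.
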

\begin{proof}
For $\alpha\in\R\cup\{-\infty,+\infty\}$, define the entropy difference
\[
   \Delta_n^{(\alpha)}:=H_{\alpha}(q_{AB})-H_{\alpha}(p_A)-H_{\alpha}(q_B).
\]
We claim that $\Delta_n^{(\alpha)}$ is everywhere continuous in $\alpha$. By definition this is true for all $\alpha\neq 0$; for $\alpha=0$, it follows from the fact that $p$ and $q$ both have full rank that $\lim_{\alpha\nearrow 0}\Delta_n^{(\alpha)}=\lim_{\alpha\searrow 0}\Delta_n^{(\alpha)}=\Delta_n^{(0)}=0$. 
Let us first compute this difference for $\alpha=1$. Defining $\eta(x):=-x\log x$ for $x\neq 0$ and $\eta(0):=0$, we get
\begin{eqnarray*}
\Delta_n^{(1)} &=& m \eta(\delta)+ mn^2\eta(\delta n^{-2})+\sum_{i=1}^m n\eta((q_i-2\delta)n^{-1})-H(p)-\eta(m\delta)-n^2 \eta(m\delta n^{-2})-n\eta((1-2m\delta)n^{-1})\\
&=&-\sum_{i=1}^m (q_i-2\delta)\log(q_i-2\delta)-H(p)+2m\delta\log m-\eta(1-2m\delta).
\end{eqnarray*}
All $n$-dependence miraculously cancels out, and we have
\[
   \lim_{\delta\searrow 0} \Delta_n^{(1)}=H(q)-H(p)>0.
\]
By continuity, positivity of $\Delta_n^{(1)}$ is ensured if $\delta$ is small enough. Furthermore, due to~(\ref{eqMutualInformation}), if $\delta$ is small enough, we will also have $I(A:B)<\varepsilon$ (note that $I(A:B)$ is in particular independent of $n$). We thus choose some $\delta\in\left(0,\frac 1 m\right)$ small enough for both and keep it fixed in all that follows. Consequently, $\Delta_n^{(1)}$ is constant in $n$ and positive, and $0<\delta<\frac 1 m$.

For finite $\alpha\not\in\{0,1\}$, we get
\begin{equation}
   \Delta_n^{(\alpha)}=-H_\alpha(p)+\frac{{\rm sgn}(\alpha)}{1-\alpha}\log\frac{m\delta^\alpha + m\delta^\alpha n^{2(1-\alpha)}+n^{1-\alpha}\sum_{i=1}^m (q_i-2\delta)^\alpha}{(m\delta)^\alpha +(m\delta)^\alpha n^{2(1-\alpha)}+(1-2m\delta)^\alpha n^{1-\alpha}}\qquad (\alpha\in\R\setminus\{0,1\}).
   \label{eqDelta}
\end{equation}
We claim that this expression is increasing in $n$, for every non-zero $\alpha\in\R\cup\{-\infty,\infty\}$. We have already shown this for $\alpha=1$, and now we will show it for all other $\alpha\not\in\{0,1\}$ by considering the following cases:
\begin{itemize}
	\item If $\alpha<0$ and $\alpha\neq-\infty$, then it is easy to see that $\Delta_n^{(\alpha)}$ is increasing in $n$ if and only if the fraction on the right-hand side of~(\ref{eqDelta}) is decreasing in $x:=n^{1-\alpha}$. In other words, we have a function
\begin{equation}
   f(x):=\frac{m\delta^\alpha +m\delta^\alpha x^2 + x\sum_{i=1}^m (q_i-2\delta)^\alpha}{(m\delta)^\alpha +(m\delta)^\alpha x^2 +(1-2m\delta)^\alpha x},
   \label{eqF}
\end{equation}
and we have to show that it is decreasing in $x$; note that we are only interested in $x\geq 1$, since $n^{1-\alpha}\geq n\geq 1$. To this end, we can simply look at the derivative
\[
   f'(x)=-\frac{(x^2-1)\delta^\alpha(m^\alpha\sum_{i=1}^m (q_i-2\delta)^\alpha - m(1-2m\delta)^\alpha)}{\left(\strut (m\delta)^\alpha+(m\delta)^\alpha x^2 +(1-2m\delta)^\alpha x\right)^2},
\]
and we see that it only remains to be shown that $m^\alpha \sum_{i=1}^m(q_i-2\delta)^\alpha\geq m(1-2m\delta)^\alpha$. Let $r_i:=(q_i-2\delta )/(1-2m\delta)$, then $r=(r_1,\ldots,r_m)$ is a probability distribution, and $H_\alpha(r)\leq -\log m$, which implies $\sum_{i=1}^m r_i^\alpha\geq m^{1-\alpha}$, and so
\[
   m^\alpha\sum_{i=1}^m (q_i-2\delta)^\alpha = m^\alpha (1-2m\delta)^\alpha \sum_{i=1}^m r_i^\alpha \geq m^{\alpha} (1-2m\delta)^\alpha m^{1-\alpha}=m(1-2m\delta)^\alpha
\]
which shows that $f'(x)\leq 0$ in the relevant interval for $x$, and we are done.
\item If $0<\alpha<1$, we can argue similarly, except that now the function $f$ in~(\ref{eqF}) has to be \emph{increasing} in $x=n^{1-\alpha}$. We can argue via the derivative exactly as above, but now $H_\alpha(r)\leq \log m$, hence $\sum_{i=1}^m r_i^\alpha\leq m^{1-\alpha}$, and therefore $m^\alpha \sum_{i=1}^m (q_i-2\delta)^\alpha\leq m(1-2m\delta)^\alpha$, which gives us the opposite sign, $f'(x)\geq 0$, as desired.
\item If $\alpha>1$, then the function $f$ in~(\ref{eqF}) also has to be increasing in $x=n^{1-\alpha}$, but since $1-\alpha<0$, we are now only interested in the interval $0<x<1$. On the one hand, we now have $\sum_{i=1}^m r_i^\alpha\geq m^{1-\alpha}$, which implies $m^\alpha \sum_{i=1}^m (q_i-2\delta)^\alpha\geq m(1-2m\delta)^\alpha$, but on the other hand, the factor $(x^2-1)$ in the derivative becomes negative, hence $f'(x)\geq 0$.
\item By continuity, $\Delta_n^{(\alpha)}$ must also be increasing for $\alpha\in\{-\infty,0,\infty\}$.
\end{itemize}
Since $\Delta_{n=1}^{(1)}>0$ and $\Delta_{n=1}^{(\alpha)}$ is continuous in $\alpha$, there exists some $\varepsilon>0$ such that $\Delta_{n=1}^{(\alpha)}>0$ for all $\alpha\in[1-\varepsilon,1+\varepsilon]$. But due to the monotonicity that we have just proven, it follows that $\Delta_n^{(\alpha)}>0$ for all $n\in\N$ and all $\alpha\in[1-\varepsilon,1+\varepsilon]$.

Now consider the interval $\alpha\in[1+\varepsilon,+\infty]$. On this interval, we have
\begin{equation}
   \lim_{n\to\infty} \Delta_n^{(\alpha)} = \log m-H_\alpha(p)>0,
   \label{eqLimitPositive}
\end{equation}
since $p$ cannot be the uniform distribution (due to $H(p)<H(q)$). For finite $\alpha\geq 1+\varepsilon$, this follows directly from~(\ref{eqDelta}), while for $\alpha=+\infty$, it follows from $H_\infty(q_{AB})=-\log\delta$ and $H_\infty(q_B)=-\log(m\delta)$ if $n$ is large enough.

Thus, on the interval $[1+\varepsilon,+\infty]$, the sequence of continuous functions $\Delta_n^{(\alpha)}$  converges pointwise to a strictly positive continuous function, namely $\log m-H_\alpha(p)$. Therefore, a version of Dini's theorem (see e.g.\ Lemma 6 in~\cite{MuellerPastena}) proves that there is some $N_+\in\N$ such that $\Delta_n^{(\alpha)}>0$ for all $n\geq N_+$ and all $\alpha\in [1+\varepsilon,+\infty]$.

Now consider the Burg entropy difference. A simple calculation yields
\begin{eqnarray*}
   \Delta_n^{\rm Burg}&:=&H_{\rm Burg}(q_{AB})-H_{\rm Burg}(p_A)-H_{\rm Burg}(q_B)\\
   &=&
   -H_{\rm Burg}(p)+\frac 1 {n^2+n+1}\left(
      \frac n m \sum_{i=1}^m \log(q_i-2\delta)-(n^2+1)\log m-n\log(1-2m\delta)
   \right).
\end{eqnarray*}
Thus, we obtain
\begin{equation}
   \lim_{n\to\infty}\Delta_n^{\rm Burg}=-\log m-H_{\rm Burg}(p)>0.
   \label{eqLimBurg}
\end{equation}
For all $\alpha\in\R$, define
\[
   \bar\Delta_n^{(\alpha)}:=\left\{
      \begin{array}{cl}
      	  \frac{1-\alpha}{|\alpha|}\Delta_n^{(\alpha)} & \mbox{if }\alpha\neq 0 \\
      	  \Delta_n^{\rm Burg} & \mbox{if }\alpha=0.
      \end{array}
   \right.
\]
Then $\bar\Delta_n^{(\alpha)}$ is continuous in $\alpha$ (in particular at $\alpha=0$). It is easy to verify that~(\ref{eqLimitPositive}) holds also true of $0<\alpha<1$; consequently, the $\bar\Delta_n^{(\alpha)}$ represent an increasing family of continuous functions on the compact interval $[0,1-\varepsilon]$ which converges to the continuous and strictly positive function (for $\alpha$ in that interval)
\[
   \lim_{n\to\infty}\bar\Delta_n^{(\alpha)}=\left\{
      \begin{array}{cl}
      	  \frac{1-\alpha}\alpha (\log m-H_\alpha(p)) & \mbox{if }\alpha>0 \\
      	  -\log m -H_{\rm Burg}(p) & \mbox{if }\alpha=0.
      \end{array}
   \right.
\]
Therefore, by Dini's theorem, there exists some $N_0\in\N$ such that $\bar\Delta_n^{(\alpha)}>0$ for all $n\geq N_0$ and all $\alpha\in[0,1-\varepsilon]$. But this implies that for all $n\geq N_0$, we have both $\Delta_n^{(\alpha)}>0$ for all $\alpha\in (0,1-\varepsilon]$ and $\Delta_n^{\rm Burg}>0$.

Now consider $\bar\Delta_n^{(\alpha)}$ on the interval $\alpha\in[-\infty,0]$. If $-\infty<\alpha<0$, then
\[
   \lim_{n\to\infty}\bar\Delta_n^{(\alpha)}=\frac{1-\alpha}{|\alpha|}\lim_{n\to\infty}\Delta_n^{(\alpha)}=\frac{1-\alpha}{|\alpha|}(-\log m -H_\alpha(p))>0.
\]
We also have
\[
   \bar\Delta_n^{(-\infty)}:=\lim_{\alpha\searrow -\infty}\Delta_n^{(\alpha)}=\Delta_n^{(-\infty)},
\]
and, if $n$ is large enough,
\[
   \Delta_n^{(-\infty)}=H_{-\infty}(q_{AB})-H_{-\infty}(q_B)-H_{-\infty}(p_A)=\log\frac\delta {n^2} - \log\frac{m\delta}{n^2}-H_{-\infty}(p)=-\log m -H_{-\infty}(p)>0
\]
since at least one entry of $p$ must be larger than $1/m$. Together with~(\ref{eqLimBurg}), this establishes that the $\bar\Delta_n^{(\alpha)}$ are a family of continuous functions on $[-\infty,0]$ that converge pointwise to a strictly positive continuous function. Again, by a version of Dini's theorem, it follows that there is some $N_-\in\N$ such that for all $n\geq N_-$, we have $\bar\Delta_n^{(\alpha)}>0$ and in particular $\Delta_n^{(\alpha)}>0$ for all $n\geq N_-$.

Thus, if we set $N:=\max\{N_-,N_0,N_+\}$, then for all $n\geq N$, we have that $\Delta_n^{(\alpha)}>0$ for all $\alpha\in[-\infty,+\infty]$ and also $\Delta_n^{\rm Burg}>0$. Therefore $p_A\otimes q_B\succ_T q_{AB}$.
\end{proof}

Lemma~\ref{LemBasic} remains true (under identical premises, and with the same form of catalyst) even if $p$ does not have full rank. We will now show this, but at the same time replace the trumping relation by majorization:
\begin{corollary}
\label{CorBasic}
Let $p,q\in\R^m$ be probability distributions such that $q$ (but not necessarily $p$) has full rank, and such that $H(p)<H(q)$. Then, for every $\varepsilon>0$ there is an extension $q_{AB}$ of $q=q_A$ with $I(A:B)\equiv S(q_{AB}\|q_A\otimes q_B)<\varepsilon$ such that
\[
   p_A\otimes q_B\succ q_{AB}.
\]
\end{corollary}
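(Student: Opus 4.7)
The plan is to extract Corollary~\ref{CorBasic} from Lemma~\ref{LemBasic} by two essentially independent moves: first upgrade the trumping relation to ordinary majorization, and then drop the full-rank hypothesis on $p$ by a perturbation argument.

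Suppose temporarily that $p$ also has full rank. Then Lemma~\ref{LemBasic} supplies, for any $\varepsilon>0$, an extension $q_{AB_0}$ of $q_A=q$ with $I(A:B_0)_{q_{AB_0}}<\varepsilon$ and $p_A\otimes q_{B_0}\succ_T q_{AB_0}$. Unfolding the definition of trumping produces an auxiliary system $C$ and a distribution $c_C$ with
\[
   p_A\otimes q_{B_0}\otimes c_C \;\succ\; q_{AB_0}\otimes c_C.
\]
The idea is to absorb $c_C$ into the extension itself. Set $B:=B_0C$ and $q_{AB}:=q_{AB_0}\otimes c_C$. This is again an extension of $q_A$, with marginal $q_B=q_{B_0}\otimes c_C$, and the display above is exactly $p_A\otimes q_B\succ q_{AB}$. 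A short computation shows that the mutual information is unchanged, $I(A:B)_{q_{AB}}=I(A:B_0)_{q_{AB_0}}<\varepsilon$, because $C$ factors out of both the joint distribution and the product of marginals.

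To drop the full-rank assumption on $p$, I would perturb $p$ towards the uniform distribution $\mu=(1/m,\ldots,1/m)$ by setting $p^{(\eta)}:=(1-\eta)p+\eta\mu$. For every $\eta\in(0,1)$ the vector $p^{(\eta)}$ has full rank, and by continuity of the Shannon entropy the strict inequality $H(p^{(\eta)})<H(q)$ persists for all sufficiently small $\eta>0$. Applying the case just established to the pair $(p^{(\eta)},q)$ yields an extension $q_{AB}$ of $q_A$ with $I(A:B)<\varepsilon$ and $p^{(\eta)}_A\otimes q_B\succ q_{AB}$. To revert from $p^{(\eta)}$ to $p$, I would invoke the elementary fact that mixing with the uniform distribution moves a vector down in majorization order, i.e.\ $p\succ p^{(\eta)}$; this follows from the inequalities $\sum_{i=1}^k p_i^\downarrow\geq k/m$, which hold because the left-hand side is the sum of the $k$ largest entries of a probability vector on $m$ symbols. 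Tensoring with $q_B$ gives $p_A\otimes q_B\succ p^{(\eta)}_A\otimes q_B$, and transitivity of $\succ$ closes the argument.

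I do not anticipate a genuine obstacle: the substance lies entirely in Lemma~\ref{LemBasic}, and the two upgrades above are bookkeeping. The only point to verify carefully is that perturbing $p$ does not spoil the mutual-information bound on the extension, but this is automatic since the extension is built from $q$ alone and $p$ enters only through the final majorization inequality.
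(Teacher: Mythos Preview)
Your proof is correct and follows essentially the same route as the paper's: perturb $p$ toward the uniform distribution to restore full rank, invoke Lemma~\ref{LemBasic}, and absorb the trumping catalyst into the $B$-system so that trumping becomes majorization. The only cosmetic difference is the order of operations---the paper perturbs first and then unfolds trumping (using transitivity of $\succ_T$), whereas you first handle the full-rank case via majorization and then perturb (using transitivity of $\succ$)---but the ingredients and logic are identical.
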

\begin{proof}
While $p$ does not necessarily have full rank, the distribution $p^{(\kappa)}\in\R^m$ does (for every $0<\kappa<1$), where $p_i^{(\kappa)}:=(1-\kappa)p_i+\kappa/m$. Since $H(p)<H(q)$ and $\lim_{\kappa\searrow 0} H(p^{(\kappa)})=H(p)$, there exists some $\kappa>0$ (smaller than one) such that $H(p^{(\kappa)})<H(q)$. Thus, we can apply Lemma~\ref{LemBasic} and get that there exists a system $C$ of suitable dimension and an extension $q'_{AC}$ of $q=q_A$ such that $p_A^{(\kappa)}\otimes q'_C\succ_T q'_{AC}$ and $S(q'_{AC}\|q'_A\otimes q'_C)<\varepsilon$. But $p\succ p^{(\kappa)}$, hence $p_A\otimes q'_C\succ p^{(\kappa)}_A\otimes q'_C$, therefore $p_A\otimes q'_C\succ_T p^{(\kappa)}_A\otimes q'_C$. Since the trumping relation is transitive, it follows that $p_A\otimes q'_C\succ_T q'_{AC}$. By definition of trumping, there exists yet another system $D$ of suitable dimension and a distribution $r_D$ such that $p_A\otimes q'_C\otimes r_D\succ q'_{AC}\otimes r_D$. Now we define $B$ to be the joint system $CD$, and $q_{AB}:=q'_{AC}\otimes r_D$, then $q_B=q'_C\otimes r_D$, and we have $p_A\otimes q_B\succ q_{AB}$. Furthermore,
\[
   S(q_{AB}\|q_A\otimes q_B)=S(q'_{AC}\otimes r_D\|q'_A\otimes q'_C\otimes r_D)=S(q'_{AC}\|q'_A\otimes q'_C)<\varepsilon.
\]
This completes the proof.
\end{proof}
This allows us to prove the main theorem of Subsection~\ref{SecProof}:
\begin{theorem}
\label{TheMaj}
Let $p,q\in\R^m$ be probability distributions with $p^\downarrow\neq q^\downarrow$. Then there exists an extension $q_{AB}$ of $q=q_A$ such that
\[
   p_A\otimes q_B\succ q_{AB}
\]
if and only if $H_0(p)\leq H_0(q)$ and $H(p)<H(q)$. Moreover, if these inequalities are satisfied, we can always choose $B$ and $q_{AB}$ such that $I(A:B)\equiv S(q_{AB}\|q_A\otimes q_B)<\varepsilon$, for any choice of $\varepsilon>0$.
\end{theorem}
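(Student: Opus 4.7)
The plan is to derive Theorem~\ref{TheMaj} from Corollary~\ref{CorBasic} (which already handles the case of full-rank $q$) together with standard Schur-concavity and subadditivity arguments for the necessity direction. The sufficiency direction reduces to zero-padding, whereas necessity needs a short extra argument to upgrade $H(p)\leq H(q)$ to strict inequality.

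For \emph{necessity}, assume the extension $q_{AB}$ exists. Schur concavity of $H_\alpha$ for $\alpha\in\{0,1\}$ and tensor additivity give
\[
   H_\alpha(p)+H_\alpha(q_B)=H_\alpha(p_A\otimes q_B)\leq H_\alpha(q_{AB}),
\]
and subadditivity yields $H_\alpha(q_{AB})\leq H_\alpha(q_A)+H_\alpha(q_B)$. Cancelling $H_\alpha(q_B)$ gives $H_0(p)\leq H_0(q)$ and $H(p)\leq H(q)$. To rule out equality at $\alpha=1$, suppose $H(p)=H(q)$. Then the above chain forces $H(q_{AB})=H(q_A)+H(q_B)$, i.e.\ $I(A{:}B)=0$, so $q_{AB}=q_A\otimes q_B$, and hence $p_A\otimes q_B\succ q_A\otimes q_B$ with equal Shannon entropies on both sides. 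Strict Schur concavity of Shannon entropy then gives $(p\otimes q_B)^\downarrow=(q\otimes q_B)^\downarrow$; comparing $\alpha$-moments for $\alpha>0$ and cancelling the common factor $\sum_j q_{B,j}^\alpha>0$ yields $\sum_i p_i^\alpha=\sum_i q_i^\alpha$ for all $\alpha>0$, which determines the sorted vector uniquely and forces $p^\downarrow=q^\downarrow$, contradicting the hypothesis.

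For \emph{sufficiency}, assume $H_0(p)\leq H_0(q)$ and $H(p)<H(q)$. Let $k:=H_0(q)$; by a permutation (which leaves majorization and all entropies invariant) we may assume $q$ is supported on $\{1,\ldots,k\}$, and since $H_0(p)\leq k$ we may also permute so that $p$ is supported on $\{1,\ldots,k\}$. Let $\tilde p,\tilde q\in\R^k$ be the corresponding restrictions: $\tilde q$ has full rank and $H(\tilde p)=H(p)<H(q)=H(\tilde q)$. Corollary~\ref{CorBasic} furnishes an extension $\tilde q_{AB}$ of $\tilde q$ on $\{1,\ldots,k\}\times B$ with $\tilde p\otimes\tilde q_B\succ \tilde q_{AB}$ and $I(A{:}B)<\varepsilon$. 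Define $q_{AB}$ on $\{1,\ldots,m\}\times B$ by padding $\tilde q_{AB}$ with zero rows for $i>k$. Then $q_A$ (on the full space) agrees with the padded $\tilde q$, the $B$-marginal equals $\tilde q_B$, and the sorted vectors of $p_A\otimes q_B$ and $q_{AB}$ coincide with those of $\tilde p\otimes\tilde q_B$ and $\tilde q_{AB}$ (with trailing zeros), so the majorization carries over. The mutual information $I(A{:}B)$ is also unchanged by zero-padding, so it remains below $\varepsilon$.

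The main obstacle is the strictness of $H(p)<H(q)$ in the necessity direction, since subadditivity only gives the non-strict inequality directly; the trick is to exploit the forced vanishing of $I(A{:}B)$ and then use strict Schur concavity of Shannon entropy together with a moment-matching argument to contradict $p^\downarrow\neq q^\downarrow$. The sufficiency direction is a clean reduction to the full-rank case already handled in Corollary~\ref{CorBasic}; verifying that zero-padding preserves both $\succ$ and $I(A{:}B)$ is routine.
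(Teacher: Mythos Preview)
Your proof is correct and follows essentially the same route as the paper: necessity via Schur concavity and subadditivity of $H_0$ and $H_1$, with strictness handled by forcing $q_{AB}$ to be a product state; sufficiency via zero-padding to reduce to the full-rank case of Corollary~\ref{CorBasic}. The only minor difference is that where the paper concludes $H(p)<H(q)$ from $p_A\succ_T q_A$ by appealing to the Klimesh--Turgut characterization of trumping, you give a self-contained moment-matching argument (strict Schur concavity forces $(p\otimes q_B)^\downarrow=(q\otimes q_B)^\downarrow$, then cancel power sums), which is slightly more elementary but reaches the same contradiction with $p^\downarrow\neq q^\downarrow$.
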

\begin{proof}
\textbf{``Only if'' part.} If $p\neq q$ and $p_A\otimes q_B\succ q_{AB}$, then we get due to additivity, subadditivity and Schur concavity of $H_\alpha$ for $\alpha\in\{0,1\}$
\[
   H_\alpha(p_A)+H_\alpha(q_B)=H_\alpha(p_A\otimes q_B)\leq H_\alpha(q_{AB})\leq H_\alpha(q_A)+H_\alpha(q_B),
\]
thus $H_\alpha(p)\leq H_\alpha(q)$. This shows that $H_0(p)\leq H_0(q)$. Now consider the $\alpha=1$ case. While we also get $H(p)\leq H(q)$, equality (i.e.\ $H(p)=H(q)$) would entail that $H(q_{AB})=H(q_A)+H(q_B)$, which is only possible if $q_{AB}=q_A\otimes q_B$. But this would give us $p_A\otimes q_B\succ q_A\otimes q_B$, or $p_A\succ_T q_A$ for $p\neq q$, which implies that $H(p)<H(q)$.

\textbf{``If'' part.} We may assume without loss of generality that the entries of $p$ and $q$ are sorted in non-increasing order, i.e.\ $p_1\geq p_2\geq\ldots$ and $q_1\geq q_2\geq \ldots$. Since $q$ may not have full rank, we can ``split off all zeros'', by writing
\[
   q=\begin{pmatrix}\tilde q \\ 0 \\ \vdots \\ 0
   \end{pmatrix}
   \qquad\mbox{where }\tilde q\in\R^d\mbox{ has full rank, i.e.\ does not contain zeros, such that }d=2^{H_0(q)}\leq m.
\]
Since $H_0(p)\leq H_0(q)$, the distribution $p$ must contain at least as many zeros as $q$, such that we can also split off $(m-d)$ zeros, and write $p=(\tilde p,0,\ldots,0)^T$, where $\tilde p\in\R^d$. But then
\[
   H(\tilde p)=H(p)<H(q)=H(\tilde q),
\]
so Corollary~\ref{CorBasic} tells us that there is an extension $\tilde q_{AB}$ of $\tilde q$ such that $\tilde p_A\otimes \tilde q_B\succ \tilde q_{AB}$.  Moreover, no matter what $\varepsilon>0$ we have chosen, we can always choose $B$ and $\tilde q_{AB}$ such that $S(\tilde q_{AB}\|\tilde q_A\otimes \tilde q_B)<\varepsilon$. Using our matrix notation for bipartite distributions, denoting the dimension of the system $B$ by $k$, and using that adding a fixed number of zeros to two distributions does not change their majorization relation, we obtain
\[
   p_A\otimes \tilde q_B =\begin{pmatrix}
       \tilde p_1 \\ \vdots \\ \tilde p_d \\ 0 \\ \vdots \\ 0
   \end{pmatrix}\otimes\begin{pmatrix}
\tilde q_{B,1} \\ \vdots \\ \tilde q_{B,k}	
\end{pmatrix}=
\begin{pmatrix}
\tilde p_1 \tilde q_{B,1} & \tilde p_1 \tilde q_{B,2} & \cdots & \tilde p_1 \tilde q_{B,k} \\
\tilde p_2 \tilde q_{B,1} & \tilde p_2 \tilde q_{B,2} & \cdots & \tilde p_2 \tilde q_{B,k} \\
\vdots & \vdots & & \vdots \\
\tilde p_d \tilde q_{B,1} & \tilde p_d \tilde q_{B,2} & \cdots & \tilde p_d \tilde q_{B,k} \\
0 & 0 & \cdots & 0 \\
\vdots & \vdots & & \vdots \\
0 & 0 & \cdots & 0
\end{pmatrix} \succ
\begin{pmatrix}
& & & \\
& & \mathbf{\tilde q_{AB}} & \\
& & & \\
0 & 0 & \cdots & 0 \\
\vdots & \vdots & & \vdots \\
0 & 0 & \cdots & 0
\end{pmatrix}=:q'_{AB},
\]
where $\mathbf{\tilde q_{AB}}$ denotes $\tilde q_{AB}$ as a large matrix block. By summing over the rows, one sees that the marginal of $q'_{AB}$ on $A$ is $(\tilde q_1,\ldots, \tilde q_d,0,\ldots,0)^T=q_A$, and by summing over the columns, one obtains $\tilde q_B$ as the marginal on $B$. Thus, $q'_{AB}$ is the sought-for extension. Moreover, since the relative entropy does not change if common zero entries of both arguments are removed, we also have $S(q'_{AB}\|q'_A\otimes q'_B)=S(\tilde q_{AB}\|\tilde q_A\otimes \tilde q_B)<\varepsilon$.
\end{proof}
This result allows us to answer an open problem from~\cite{MuellerPastena}. There we have defined a notion of \emph{correlated trumping}: we say that $p$ \emph{c-trumps} $q$, denoted $p\succ_c q$, if there exists some $k\in\N_0$ and a $k$-partite distribution $r_{1,2,\ldots,k}$ such that
\begin{equation}
   p\otimes(r_1\otimes r_2\otimes\ldots\otimes r_k)\succ q\otimes r_{1,2,\ldots,k},
   \label{eqCTrumping}
\end{equation}
where $r_1,\ldots,r_k$ are the marginals of $r_{1,\ldots,k}$. In~\cite{MuellerPastena}, we have shows that $p\succ_c q$ for $p\neq q$ if and only if $H_0(p)\leq H_0(q)$ and $H(p)<H(q)$. We have also shown that we can always choose $k=3$, but we were not able to answer the question whether $k=2$ catalysts are always sufficient. Theorem~\ref{TheMaj} allows us to answer this question in the positive.
\begin{theorem}
\label{The2Trumping}
Let $p,q\in\R^m$ be probability distributions with $p\neq q$. Then there exist auxiliary systems $B,C$ and a bipartite distribution $r_{BC}$ such that
\[
   p_A\otimes(r_B\otimes r_C)\succ q_A\otimes r_{BC}
\]
if and only if $H_0(p)\leq H_0(q)$ and $H(p)<H(q)$. Here, $r_B$ and $r_C$ denote the marginals of $r_{BC}$.
\end{theorem}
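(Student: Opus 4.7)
The plan is to derive Theorem~\ref{The2Trumping} as an almost immediate corollary of Theorem~\ref{TheMaj}: the bipartite extension $q_{XY}$ furnished by the Main Theorem will itself serve as the bipartite catalyst $r_{BC}$ required here, its two marginals playing the roles of $r_B$ and $r_C$.

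For the ``only if'' direction I would mimic the argument from the corresponding half of Theorem~\ref{TheMaj}. Assuming $p_A\otimes r_B\otimes r_C\succ q_A\otimes r_{BC}$ (and tacitly $p^\downarrow\neq q^\downarrow$, as in the Main Theorem), Schur-concavity plus additivity of $H_\alpha$ for $\alpha\in\{0,1\}$, combined with the subadditivity bound $H_\alpha(r_{BC})\leq H_\alpha(r_B)+H_\alpha(r_C)$, immediately give $H_\alpha(p)\leq H_\alpha(q)$. The upgrade to strict inequality at $\alpha=1$ goes as in the Main Theorem: equality $H(p)=H(q)$ would saturate subadditivity and hence force $r_{BC}=r_B\otimes r_C$; the premise would then reduce to $p\succ_T q$ with catalyst $r_B\otimes r_C$, and the Klimesh--Turgut criterion would demand $H(p)<H(q)$, a contradiction.

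For the ``if'' direction I would invoke Theorem~\ref{TheMaj} with the given $p,q$ to obtain an extension $q_{XY}$ of $q_X=q$ satisfying $p_X\otimes q_Y\succ q_{XY}$. Relabelling $X\mapsto B$, $Y\mapsto C$, set $r_{BC}:=q_{XY}$, so that the marginals are $r_B=q$ and $r_C=q_Y$. Tensoring the majorization on both sides by a copy of $q$ on a fresh system $A$ yields $q_A\otimes p_B\otimes r_C\succ q_A\otimes r_{BC}$. Since $q_A\otimes p_B$ and $p_A\otimes q_B$ have identical multisets of entries (they are related by a permutation of the sample space $\{1,\ldots,m\}^2$), they majorize each other, and replacing $q_A\otimes p_B$ by $p_A\otimes q_B$ gives $p_A\otimes r_B\otimes r_C\succ q_A\otimes r_{BC}$, as required. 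As a bonus, the last clause of Theorem~\ref{TheMaj} lets us make $I(B:C)_{r_{BC}}$ arbitrarily small.

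I do not anticipate any genuine obstacle: all the analytic work has already been absorbed into Lemma~\ref{LemBasic} and Theorem~\ref{TheMaj}, and what remains is a purely structural repackaging. This is precisely what closes the gap between $k=3$ and $k=2$ c-trumping left open in~\cite{MuellerPastena}.
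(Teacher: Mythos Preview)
Your proposal is correct and follows essentially the same route as the paper: invoke Theorem~\ref{TheMaj} to get an extension $q_{XY}$ with $p_X\otimes q_Y\succ q_{XY}$, tensor with a fresh copy of $q$, and use the permutation symmetry of $p\otimes q$ versus $q\otimes p$ to rearrange the factors. The paper phrases the last step as swapping $A\leftrightarrow B$ on the right-hand side rather than on the left, but this is purely cosmetic; the construction of $r_{BC}$ and the ``only if'' argument (including the tacit strengthening to $p^\downarrow\neq q^\downarrow$) are identical.
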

\begin{proof}
The ``only if''-part of the proof is completely analogous to the corresponding part of the proof of Theorem~\ref{TheMaj} and thus omitted. For the ``if''-part, the premises $p\neq q$ and $H_0(p)\leq H_0(q)$ as well as $H(p)<H(q)$ imply, due to Theorem~\ref{TheMaj}, that there exists some auxiliary system $C$ and an extension $q_{AC}$ of $q=q_A$ such that $p_A\otimes q_C\succ q_{AC}$. Now introduce another system $B$ of the same dimension as $A$, and define a distribution $q_B$ which is just a copy of $q=q_A$. Then
\[
   p_A\otimes q_C\otimes q_B \succ q_{AC}\otimes q_B.
\]
Finally, since the majorization relation is permutation-invariant, we perform the swap of systems $A\leftrightarrow B$ on the right-hand side, and obtain
\[
   p_A\otimes (q_B\otimes q_C) \succ q_A\otimes q_{BC}
\]
(the left-hand side is simply a change of notation and not a physical swap). Thus, we can choose $r_{BC}:=q_{BC}$.
\end{proof}
Note that the results of~\cite{MuellerPastena}, i.e.\ the characterization of c-trumping (as defined in~(\ref{eqCTrumping})) via $H$ and $H_0$, is a strictly weaker result than the main majorization result of the present work, Theorem~\ref{TheMaj}. First, as the proof of Theorem~\ref{The2Trumping} above shows, the result of~\cite{MuellerPastena} can mathematically easily be obtained, and extended, from the results of the present paper. Second, Lemma 5 of~\cite{MuellerPastena} is a strictly weaker version of the present work's Theorem~\ref{TheMaj}, establishing sufficiency of the monotonicity of all $H_\alpha$, for $\alpha\geq 1$, for the existence of a correlating catalytic state transition (between full-rank states), while now we know that monotonicity of $H=H_1$ is enough. Regarding the thermodynamic version of~\cite{MuellerPastena} described in~\cite{Lostaglio}, c-trumping as in~(\ref{eqCTrumping}) can be physically interpreted as the irreversible use of $k$ auxiliary systems to admit a state transition $p\to q$ on the physical system of interest. That is, stochastic independence is used up as a ``fuel'' in a non-repeatable way. In contrast, the present paper describes a more natural thermodynamic scenario in which a single auxiliary system (that we can interpret as being part of a thermal machine) is used catalytically to implement state transitions on a single system. The auxiliary system can be used repeatedly on further copies of the system, which is arguably crucial for a thermodynamic cycle.

\subsection{Results for non-trivial Hamiltonians}
In this section, we will change our notation slightly, and call the auxiliary system $M$ (for ``thermal machine''), since $B$ is misleading in the thermodynamic context (it could be confused with the ``bath'').

Our main tool to transfer the results for trivial Hamiltonians to the case of non-trivial Hamiltonians will be a technique that has been introduced in~\cite{Brandao} and has also been applied in~\cite{Lostaglio}: the \emph{embedding map} $\Gamma_{\mathbf{d}}$. Given any ordered list of positive integers $\mathbf{d}=(d_1,\ldots,d_n)$, the  stochastic map $\Gamma_{\mathbf{d}}:\R^n\to \R^D$ is defined as
\[
   \Gamma_{\mathbf{d}}(p):=\bigoplus_{i=1}^np_i \mu_i
   =\left(
      \underbrace{\frac{p_1}{d_1},\ldots,\frac{p_1}{d_1}}_{d_1},
      \underbrace{\frac{p_2}{d_2},\ldots,\frac{p_2}{d_2}}_{d_2},\ldots,
      \underbrace{\frac{p_n}{d_n},\ldots,\frac{p_n}{d_n}}_{d_n}
   \right),
\]
where $\mu_i=\left(1/d_i,\ldots,1/d_i\right)\in\R^{d_i}$ is the uniform distribution in $d_i$ dimensions, and $D=\sum_{i=1}^n d_i$.
\begin{lemma}
\label{LemCanonMicro}
Let $A$ be a system with thermal distribution $\gamma_A$ that has only rational entries, i.e.\ that can be written in the form
\begin{equation}
   \gamma_A=\left( \frac{d_1}D,\frac{d_2}D,\ldots,\frac{d_n}D\right)\in\R^n.
   \label{eqGibbsRational}
\end{equation}
Then, for every $\alpha\in\R\cup\{-\infty,+\infty\}$, the $\alpha$-free energies of any $p_A$ are given by
\[
   F_\alpha(p_A)-F_\alpha(\gamma_A)\equiv kT\, S_\alpha(p_A\|\gamma_A)=kT\left({\rm sgn}^+(\alpha)\log D -H_\alpha(\Gamma_{\mathbf{d}}(p_A))\right),
\]
where $\mathbf{d}=(d_1,\ldots,d_n)$.
\end{lemma}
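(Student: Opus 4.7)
\textbf{Proof proposal for Lemma~\ref{LemCanonMicro}.}
The first equality is a triviality: by definition $F_\alpha(p_A)=-kT\log Z + kT\,S_\alpha(p_A\|\gamma_A)$, and since $S_\alpha(\gamma_A\|\gamma_A)=0$ we get $F_\alpha(\gamma_A)=-kT\log Z$, hence $F_\alpha(p_A)-F_\alpha(\gamma_A)=kT\,S_\alpha(p_A\|\gamma_A)$. So the entire content of the lemma is the second equality, which I would establish by a direct computation using the explicit form of $\Gamma_{\mathbf{d}}$ and the identity $d_i = D\gamma_i$. The plan is to first dispatch the generic case $\alpha\in\R\setminus\{0,1\}$ and then handle $\alpha\in\{0,1,\pm\infty\}$ separately (either directly, or by continuity, taking advantage of the fact that all the relevant quantities are continuous in $\alpha$).

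For generic $\alpha\in\R\setminus\{0,1\}$, I would start from the definition of $H_\alpha$ and group together the $d_i$ identical entries $p_i/d_i$ contributed by the block $p_i\mu_i$ of $\Gamma_{\mathbf{d}}(p_A)$:
\[
   \sum_{j=1}^D \bigl(\Gamma_{\mathbf{d}}(p_A)\bigr)_j^\alpha \;=\; \sum_{i=1}^n d_i\left(\frac{p_i}{d_i}\right)^{\!\alpha} \;=\;\sum_{i=1}^n p_i^\alpha\, d_i^{1-\alpha}.
\]
Substituting $d_i = D\gamma_i$ factors out $D^{1-\alpha}$, so taking logs and multiplying by $\text{sgn}(\alpha)/(1-\alpha)$ yields
\[
   H_\alpha(\Gamma_{\mathbf{d}}(p_A)) \;=\; \text{sgn}(\alpha)\log D \;+\; \frac{\text{sgn}(\alpha)}{1-\alpha}\log\sum_i p_i^\alpha \gamma_i^{1-\alpha}.
\]
Since $\text{sgn}(\alpha)=\text{sgn}^+(\alpha)$ for $\alpha\neq 0$ and $\text{sgn}^+(\alpha)/(1-\alpha)=-\text{sgn}^+(\alpha)/(\alpha-1)$, the second term on the right is exactly $-S_\alpha(p_A\|\gamma_A)$. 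Rearranging gives the claimed identity. This is the main computation and it is essentially just bookkeeping.

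For $\alpha=1$, I would compute $H(\Gamma_{\mathbf{d}}(p_A))=-\sum_i d_i (p_i/d_i)\log(p_i/d_i)=H(p_A)+\sum_i p_i\log d_i=H(p_A)+\log D+\sum_i p_i\log\gamma_i$, which combined with $S(p_A\|\gamma_A)=-H(p_A)-\sum_i p_i\log\gamma_i$ gives $H(\Gamma_{\mathbf{d}}(p_A))=\log D - S(p_A\|\gamma_A)$. For $\alpha=0$, note that $\Gamma_{\mathbf{d}}(p_A)$ has exactly $\sum_{i:p_i\neq 0}d_i=D\sum_{i:p_i\neq 0}\gamma_i$ nonzero entries, so $H_0(\Gamma_{\mathbf{d}}(p_A))=\log D + \log\sum_{i:p_i\neq 0}\gamma_i=\log D - S_0(p_A\|\gamma_A)$. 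For $\alpha=+\infty$, $H_\infty(\Gamma_{\mathbf{d}}(p_A))=-\log\max_i (p_i/d_i)=-\log\max_i(p_i/\gamma_i)+\log D=\log D-S_\infty(p_A\|\gamma_A)$. The case $\alpha=-\infty$ is analogous using $S_{-\infty}(p\|\gamma)=S_\infty(\gamma\|p)$ and the dual identity for the minimum nonzero entry, with $\text{sgn}^+(-\infty)=-1$ accounting for the sign.

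I do not expect any real obstacles: the proof is essentially one computation repeated with minor variations. The only mild nuisance is keeping the $\text{sgn}(\alpha)$ versus $\text{sgn}^+(\alpha)$ signs straight (they agree except at $\alpha=0$, where the logarithmic divergence is absorbed anyway), and being careful that the convention $0\cdot\log 0=0$ together with the conventions on the boundary values $\alpha\in\{0,\pm\infty\}$ is consistent throughout. A brief continuity remark at the end, confirming that both sides of the identity are continuous in $\alpha$ on $[-\infty,+\infty]$ in the appropriate sense, would give an alternative unified justification for the boundary cases.
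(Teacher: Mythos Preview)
Your proposal is correct and follows exactly the approach the paper indicates: the paper's own proof is the single sentence ``Simply evaluate the definition of $H_\alpha(\Gamma_{\mathbf{d}}(p_A))$ for the different cases of $\alpha$,'' and your case-by-case computation (generic $\alpha$, then $\alpha\in\{0,1,\pm\infty\}$) is precisely that evaluation carried out in detail.
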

\begin{proof}
Simply evaluate the definition of $H_\alpha(\Gamma_{\mathbf{d}}(p_A))$ for the different cases of $\alpha$.
\end{proof}
In order to prove our main result, we need the following generalization and slight reformulation of Lemma 15 in~\cite{Brandao}.
\begin{lemma}
\label{LemApprox}
Let $r,r'\in\R^n$ be probability distributions with full rank (i.e.\ without any zero entries). Then there exists a stochastic map $\Phi:\R^n\to\R^n$ with $\Phi(r)=r'$ and
\[
   \|\Phi(p)-p\|\leq \max_j\left(1- \frac{r'_j}{r_j}\right)\qquad\mbox{for all probability distributions }p\in\R^n.
\]
In this sense, if $r\approx r'$ then $\Phi(p)\approx p$ for all distributions $p$.
\end{lemma}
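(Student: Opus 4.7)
The plan is to construct $\Phi$ explicitly as a convex combination of the identity and a constant channel. Set $\lambda := \min_j r'_j/r_j$; note that $\max_j(1 - r'_j/r_j) = 1 - \lambda$, so the claimed bound is $\|\Phi(p)-p\| \leq 1-\lambda$. Since $r,r'$ are full-rank probability distributions with equal total mass, we have $0 < \lambda \leq 1$, with $\lambda = 1$ iff $r = r'$ (in which case $\Phi = {\rm id}$ works trivially), so I may assume $\lambda < 1$.

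First I would define the auxiliary vector
\[
   \pi := \frac{r' - \lambda r}{1-\lambda}.
\]
Its entries are nonnegative because $\pi_j = (r'_j - \lambda r_j)/(1-\lambda)$ and by definition of $\lambda$ we have $r'_j/r_j \geq \lambda$ for every $j$; its entries sum to $(1-\lambda)/(1-\lambda) = 1$ since $\sum r_j = \sum r'_j = 1$. Hence $\pi$ is a bona fide probability distribution. Now define the linear map $\Phi$ by its matrix entries
\[
   \Phi_{ij} := \lambda\,\delta_{ij} + (1-\lambda)\,\pi_i.
\]
Each column sums to $\lambda + (1-\lambda) = 1$, so $\Phi$ is stochastic, and on any probability distribution $p$ it acts as $\Phi(p) = \lambda p + (1-\lambda)\pi$.

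Verification is then direct. By construction $\Phi(r) = \lambda r + (1-\lambda)\pi = \lambda r + (r' - \lambda r) = r'$, which gives the required interpolation. For the norm bound, for any probability distribution $p$,
\[
   \Phi(p) - p = \lambda p + (1-\lambda)\pi - p = (1-\lambda)(\pi - p),
\]
so using that the trace distance between any two probability distributions is at most $1$,
\[
   \|\Phi(p) - p\| = (1-\lambda)\,\|\pi - p\| \leq 1 - \lambda = \max_j\left(1 - \frac{r'_j}{r_j}\right),
\]
which is exactly what is needed.

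There is no significant obstacle: the only thing to check is that $\pi$ is a valid probability distribution, and this is precisely why one picks $\lambda$ as the minimum rather than some other value. The interpretation is that the ``depolarizing'' part of the channel absorbs the entire discrepancy between $r$ and $r'$ in a uniform way, so that when $r$ and $r'$ are close (i.e.\ all ratios $r'_j/r_j$ are close to $1$, hence $\lambda$ is close to $1$) the channel $\Phi$ is close to the identity on every input, which will be the property needed downstream when inverting the embedding $\Gamma_{\mathbf{d}}$ in the proofs of Main Results 1--3.
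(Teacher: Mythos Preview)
Your proof is correct and constructs exactly the same map as the paper: expanding the paper's definition $\Phi(p)=S_\lambda(p)-(S_\lambda(r)-r')$ for probability vectors yields $\Phi(p)=\lambda p+(r'-\lambda r)$, which is your $\lambda p+(1-\lambda)\pi$. Your presentation as a convex combination of the identity and a constant channel is somewhat cleaner than the paper's ``shrink-and-translate'' description, but the construction, the choice $\lambda=\min_j r'_j/r_j$, and the norm estimate are the same.
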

\begin{proof}
The idea is to construct a map $\Phi$ that first ``shrinks'' the probability simplex, and then translates the shrunk simplex within the original simplex so that $r$ is mapped to $r'$. To this end, set $u(x):=x_1+\ldots+x_n$ for $x\in\R^n$, and for every $0\leq\lambda\leq 1$, define the ``shrinking map''
\[
   S_\lambda(x):=\lambda x+u(x)(1-\lambda)\mu \qquad (x\in\R^n,0\leq\lambda\leq 1),
\]
where $\mu\in\R^n$ is the uniform distribution. Finally, set
\[
   \Phi(x):=S_\lambda(x)-u(x)\left(S_\lambda(r)-r'\right)\qquad (x\in\R^n).
\]
It is easy to see that $\Phi$ preserves the normalization of probability distributions, i.e.\ $u(\Phi(x))=u(x)$ for all $x\in\R^n$. For $\Phi$ to be stochastic, it is thus necessary and sufficient that it maps the standard basis vectors $e_j=(0,\ldots,0,\underbrace{1}_j,0,\ldots,0)^T$ to vectors with non-negative entries. Since $\Phi(e_i)=\lambda e_i-\lambda r+r'$, we have $(\Phi(e_i))_i=\lambda(1- r_i)+r'_i$, which is non-negative since $r_i\leq 1$. On the other hand, for $(\Phi(e_i))_j=r'_j-\lambda r_j$ for $i\neq j$ to be non-negative, we need that $\lambda\leq r'_j/r_j$. Thus, if we define
\[
   \lambda:=\min_j \frac{r'_j}{r_j},
\]
the resulting map $\Phi$ will be stochastic. Since $\Phi(p)_i=\lambda p_i-\lambda r_i+r'_i$, we have
\[
   \|\Phi(p)-p\|=\frac 1 2 \sum_{i=1}^n |(\Phi(p))_i-p_i|\leq \frac 1 2 \sum_{i=1}^n \left(
      |\lambda p_i -p_i|+|\underbrace{r'_i-\lambda r_i}_{\geq 0}|
   \right)=1-\lambda
\]
for every probability distribution $p\in\R^n$, which completes the proof.
\end{proof}

\begin{theorem}
Consider a system $A$ with Hamiltonian $\H_A$ and two distributions $p_A$ and $q_A$.
Then, for every $\varepsilon>0$ there exists a distribution $q_A^\epsilon$ with $\|q_A^\varepsilon-q_A\|<\varepsilon$, an auxiliary system $M$ and an extension $q^\varepsilon_{AM}$ of $q^\varepsilon_A$ as well as a thermal operation $\mathcal{T}_\varepsilon$ with
\begin{equation}
   \mathcal{T}_\varepsilon(p_A\otimes q^\varepsilon_M)=q^\varepsilon_{AM}
   \label{eqCorrCat}
\end{equation}
if and only if $F(p_A)\geq F(q_A)$. Moreover, we can always choose the Hamiltonian on $M$ to be trivial, $\H_M=0$, and we can choose $M$ and $q_{AM}^\varepsilon$ such that $I(A:M)\equiv S(q_{AM}^\varepsilon\|q_A^\varepsilon\otimes q_M^\varepsilon)$ is (possibly nonzero but) as small as we like.

Note that the marginal on $M$ is exactly identical before and after the transformation, namely equal to $q_M^\varepsilon$.
\end{theorem}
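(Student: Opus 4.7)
\emph{Necessity} follows from the monotonicity of Helmholtz free energy under thermal operations together with its superadditivity on block-diagonal joint states, $F(\sigma_{AM}) \geq F(\sigma_A) + F(\sigma_M)$ (equivalent, up to a factor of $k_BT$, to subadditivity of the relative entropy to the product Gibbs state). Since $F$ is additive on the initial product state and the $M$-marginal of $q^\varepsilon_{AM}$ is exactly $q_M^\varepsilon$, monotonicity and superadditivity applied to \eqref{eqCorrCat} collapse the chain of inequalities to $F(p_A) \geq F(q_A^\varepsilon)$; letting $\varepsilon \searrow 0$ and using continuity of $F$ yields $F(p_A) \geq F(q_A)$.

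\emph{Sufficiency} reduces, via the embedding $\Gamma_{\mathbf d}$, to the trivial-Hamiltonian case already handled by Theorem~\ref{TheMaj}. I would first approximate $\gamma_A$ by a Gibbs state $\tilde\gamma_A$ with rational entries $d_i/D$, associated to a nearby Hamiltonian $\tilde\H_A$, and replace $q_A$ by $q_A^\varepsilon := (1-\eta) q_A + \eta\mu_A$ with $\eta>0$ small enough that $\|q_A^\varepsilon - q_A\| < \varepsilon$, $q_A^\varepsilon$ has full rank, and the free energy with respect to $\tilde\gamma_A$ strictly decreases. Lemma~\ref{LemCanonMicro} then translates the hypothesis into $H_0(\Gamma_{\mathbf d}(p_A)) \leq D = H_0(\Gamma_{\mathbf d}(q_A^\varepsilon))$ and $H(\Gamma_{\mathbf d}(p_A)) < H(\Gamma_{\mathbf d}(q_A^\varepsilon))$, so Theorem~\ref{TheMaj} applied to the embedded distributions yields a system $M$ with trivial Hamiltonian and an extension $r_{\hat A M}$ of $\Gamma_{\mathbf d}(q_A^\varepsilon)$ of arbitrarily small mutual information with $\Gamma_{\mathbf d}(p_A) \otimes r_M \succ r_{\hat A M}$, hence a bistochastic map $\Lambda$ on $\hat A\otimes M$ realising this majorisation. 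Composing $\Lambda$ with the ``uniformise within each energy block of $\hat A$'' map --- itself bistochastic and $M$-marginal-preserving, with image in the range of $\Gamma_{\mathbf d}\otimes \mathrm{id}_M$ --- allows me to pull the transition back through $\Gamma_{\mathbf d}^{-1}$ to a stochastic map on $A\otimes M$ that preserves $\tilde\gamma_A \otimes r_M$ and sends $p_A\otimes r_M$ to a distribution $\tilde q_{AM}^\varepsilon$ with $A$-marginal $q_A^\varepsilon$ and $M$-marginal exactly $r_M =: q_M^\varepsilon$.

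The main obstacle is translating this from the rationalised Hamiltonian $\tilde\H_A$ back to the given $\H_A$ while keeping the catalyst's $M$-marginal exactly preserved. For this I would invoke Lemma~\ref{LemApprox} twice to build stochastic maps $\Phi', \Phi : \R^n \to \R^n$ with $\Phi'(\gamma_A) = \tilde\gamma_A$ and $\Phi(\tilde\gamma_A) = \gamma_A$, each perturbing any input by an amount that vanishes as the rational approximation is refined. Sandwiching the stochastic map of the previous paragraph between $\Phi'\otimes \mathrm{id}_M$ and $\Phi\otimes\mathrm{id}_M$ produces a stochastic map that preserves the true product Gibbs state $\gamma_A \otimes \mu_M$, whose output is trace-close to $\tilde q_{AM}^\varepsilon$, and whose $M$-marginal is still exactly $q_M^\varepsilon$ because $\Phi, \Phi'$ act only on $A$. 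The Janzing et al.\ result cited in the preliminaries then promotes this Gibbs-preserving stochastic map to a thermal operation to arbitrary accuracy, and tuning the four independent approximation parameters (rational precision, $\eta$, mutual-information budget, and the Janzing promotion) so their combined error on the $A$-marginal stays below $\varepsilon$ yields \eqref{eqCorrCat} with all stated properties, $q_{AM}^\varepsilon$ being defined as the actual output of $\mathcal T_\varepsilon$. The trivial $\H_M$, the arbitrarily small $I(A:M)$, and the exact preservation of $q_M^\varepsilon$ descend directly from Theorem~\ref{TheMaj}.
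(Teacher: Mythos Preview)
Your strategy coincides with the paper's: necessity via superadditivity, monotonicity and continuity of $F$; sufficiency by embedding via $\Gamma_{\mathbf d}$, invoking Theorem~\ref{TheMaj}, and sandwiching with the Gibbs-adjusting maps of Lemma~\ref{LemApprox}. Two technical steps, however, fail as written.

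First, mixing $q_A$ with the \emph{uniform} distribution $\mu_A$ need not strictly lower the free energy. In the borderline case $F(p_A)=F(q_A)$ with $q_A\neq\gamma_A$, Theorem~\ref{TheMaj} demands a \emph{strict} Shannon-entropy gap after embedding, hence $S(q_A^\varepsilon\|\gamma_A)<S(q_A\|\gamma_A)$; but convexity only yields $S((1-\eta)q_A+\eta\mu_A\|\gamma_A)\leq(1-\eta)S(q_A\|\gamma_A)+\eta S(\mu_A\|\gamma_A)$, and the right-hand side can exceed $S(q_A\|\gamma_A)$ (take $q_A=\mu_A$ for a trivial failure). The paper mixes with $\gamma_A$ instead, so the bound collapses to $(1-\varepsilon/2)S(q_A\|\gamma_A)$, strictly smaller whenever $q_A\neq\gamma_A$; the degenerate cases $q_A=\gamma_A$ and $q_A=p_A$ are dispatched separately.

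Second --- and this is the point that carries the catalytic content --- you construct $\Lambda$ to send $\Gamma_{\mathbf d}(p_A)\otimes r_M$ to $r_{\hat A M}$, and only afterwards pre-compose with $\Phi'\otimes\id_M$. In the full composite applied to $p_A\otimes r_M$, the state actually reaching $\Lambda$ is $\Gamma_{\mathbf d}(\Phi'(p_A))\otimes r_M$, not its design input. A generic bistochastic map does not preserve the $M$-marginal on off-design inputs, so the $M$-marginal of the output is only \emph{approximately} $r_M$; your justification ``because $\Phi,\Phi'$ act only on $A$'' controls the sandwiching layers but says nothing about what $\Lambda$ does on the perturbed input. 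Since exact preservation of the catalyst is what separates this theorem from embezzlement, this error cannot be absorbed into the $\varepsilon$-budget on $A$. The paper fixes this by applying the forward Gibbs-adjusting map to $p_A$ \emph{before} embedding (setting $p_A^{(\delta)}:=\Phi(p_A)$), verifying the entropy inequalities for $\Gamma_\delta(p_A^{(\delta)})$ by continuity, and building $\Lambda$ for that input; the composite then hits $\Lambda$ exactly at its design point and the $M$-marginal comes out as $r_M$ on the nose.
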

\begin{proof}
We start with the \textbf{``only if''} part of the proof. Since the free energy $F$ is superadditive, decreasing under thermal operations, and additive, (\ref{eqCorrCat}) implies
\[
   F(q_A^\varepsilon)+F(q_M^\varepsilon)\leq F(q_{AM}^\varepsilon)\leq F(p_A\otimes q_M^\varepsilon)=F(p_A)+F(q_M^\varepsilon).
\]
Thus, for every $\varepsilon>0$ there is a distribution $q_A^\varepsilon$ which is $\varepsilon$-close to $q_A$ such that $F(q_A^\varepsilon)\leq F(p_A)$. Due to the continuity of $F$, it follows that $F(q_A)\leq F(p_A)$.

For the \textbf{``if''} direction, suppose that $p_A$ and $q_A$ are distributions with $F(p_A)\geq F(q_A)$, which is equivalent to $S(p_A\|\gamma_A)\geq S(q_A\|\gamma_A)$. First, consider the case that $q_A$ is the thermal state, $q_A=\gamma_A$. Then we can choose $M$ to be the trivial system, and $\mathcal{T}_\varepsilon$ can be the thermal operation that simply prepares the thermal state. Similarly, if $q_A=p_A$, then we can simply choose the identity map as our thermal operation. Let us now turn to the case $q_A\neq\gamma_A$ and $q_A\neq p_A$.

In general, the thermal distribution $\gamma_A$ will have non-rational entries and thus not be of the form~(\ref{eqGibbsRational}). However, since distributions with rational entries are dense in the set of all distributions, for every $\delta>0$, we can find another distribution $\gamma^{(\delta)}_A$ with all rational entries and $\max_j(1-\gamma^{(\delta)}_j/\gamma_j)<\delta$ as well as $\max_j(1-\gamma_j/\gamma^{(\delta)}_j)<\delta$ (just pick $\gamma^{(\delta)}$ close enough to $\gamma$). Due to Lemma~\ref{LemApprox}, there exists a stochastic map $\Phi:\R^n\to\R^n$ such that $\Phi(\gamma)=\gamma^{(\delta)}$ and $\|\Phi(s)-s\|<\delta$ for all distributions $s\in\R^n$, and there also exists a stochastic map $\bar\Phi:\R^n\to\R^n$ with $\bar\Phi(\gamma^{(\delta)})=\gamma$ and $\|\bar\Phi(s)-s\|<\delta$ for all distributions $s\in\R^n$. Writing
\[
   \gamma^{(\delta)}=\left(\frac{d_1^{(\delta)}}{D_\delta},\frac{d_2^{(\delta)}}{D_\delta},\ldots,\frac{d_n^{(\delta)}}{D_\delta}\right),\qquad
   \mathbf{d}^\delta:=\left(d_1^{(\delta)},d_2^{(\delta)},\ldots,d_n^{(\delta)}\right)\in\N^n,
\]
we obtain a corresponding embedding map $\Gamma_{\mathbf{d}^\delta}=:\Gamma_\delta$ that we will use shortly.

But before doing so, define $q^{(\varepsilon)}_A:=(1-\frac{\varepsilon} 2 )q_A+\frac\varepsilon 2 \gamma_A$ for every $0<\varepsilon<1$. It follows that $\|q_A^{(\varepsilon)}-q_A\|\leq \frac\varepsilon 2$. Due to the convexity of the relative entropy, we have
\[
   S\left(q_A^{(\varepsilon)}\|\gamma_A\right)\leq \left(1-\frac\varepsilon 2\right)S(q_A\|\gamma_A)+\frac \varepsilon 2 S(\gamma_A\|\gamma_A)=\left(1-\frac \varepsilon 2 \right)S(q_A\|\gamma_A)\leq\left(1-\frac\varepsilon 2\right)S(p_A\|\gamma_A).
\]
Let $p_A^{(\delta)}:=\Phi_A(p_A)$, then $\|p_A^{(\delta)}-p_A\|<\delta$. Since $\lim_{\delta\searrow 0}S\left(q_A^{(\varepsilon)}\|\gamma_A^{(\delta)}\right)=S(q_A^{(\varepsilon)}\|\gamma_A)$ and $\lim_{\delta\searrow 0} S\left(p_A^{(\delta)}\|\gamma_A^{(\delta)}\right)=S(p_A\|\gamma_A)$ due to continuity, we can pick $\delta>0$ small enough such that
\[
   S\left(q_A^{(\varepsilon)}\|\gamma_A^{(\delta)}\right)<S\left(p_A^{(\delta)}\|\gamma_A^{(\delta)}\right).
\]
In the following, let us assume that, for any choice of $\varepsilon>0$, we have chosen $\delta>0$ small enough for this to be satisfied, and by doing so we also make sure that $\delta<\varepsilon/2$. Then Lemma~\ref{LemCanonMicro} shows that
\[
   H\left(\Gamma_\delta(p_A^{(\delta)})\right)< H\left(\Gamma_\delta(q_A^{(\varepsilon)})\right).
\]
Since $\gamma_A$ has full rank, so does $q_A^{(\varepsilon)}$, and thus $\Gamma_\delta(q_A^{(\varepsilon)})$ has full rank too, hence $H_0\left(\Gamma_\delta(p_A^{(\delta)})\right)\leq H_0\left(\Gamma_\delta(q_A^{(\varepsilon)})\right)$.
Denoting the $D_\delta$-dimensional system by $A'$, such that $\Gamma_\delta$ is a map from $A$ to $A'$, Theorem~\ref{TheMaj} tells us that there exists a distribution $r^{(\varepsilon)}_{A'M}$ on $A'M$ (recall that $\delta$ depends on the choice of $\varepsilon$) such that $r^{(\varepsilon)}_{A'}=\Gamma_\delta(q_A^{(\varepsilon)})$ and $\Gamma_\delta(p_A^{(\delta)})\otimes r_M^{(\varepsilon)}\succ r_{A'M}^{(\varepsilon)}$. Moreover, for any choice of $\kappa>0$, we can choose this state such that $I(A':M)\equiv S(r_{A'M}^{(\varepsilon)}\|r_{A'}^{(\varepsilon)}\otimes r_M^{(\varepsilon)})<\kappa$. Therefore, there exists a bistochastic map $\Lambda_\varepsilon:A'\otimes M\to A'\otimes M$ (i.e.\ a stochastic map with $\Lambda_\varepsilon(\mu_{A'}\otimes\mu_M)=\mu_{A'}\otimes\mu_M$) such that
\[
   \Lambda_\varepsilon\left(\Gamma_\delta(p_A^{(\delta)})\otimes r_M^{(\varepsilon)}\right)=r_{A'M}^{(\varepsilon)}.
\]
Let us define a stochastic map $\bar\Gamma_\delta:\R^{D_\delta}\to\R^n$ which is a pseudo-inverse of $\Gamma_\delta$ via
\[
   \bar\Gamma_\delta(x):=\left(\sum_{i=1}^{d_1^\delta} x_i,\sum_{i=d_1^\delta}^{d_1^\delta+d_2^\delta}x_i,\ldots,\sum_{i=d_1^\delta+\ldots+d_{n-1}^\delta}^{D_\delta} x_i\right)\qquad (x=(x_1,\ldots,x_n)\in\R^{D_\delta}),
\]
then we have $\bar\Gamma_\delta\circ\Gamma_\delta=\id_A$. Furthermore, define a linear map $\mathcal{T}_\varepsilon:A\otimes M \to A\otimes M$ via
\[
   \mathcal{T}_\varepsilon:=\left(\bar\Phi_A\otimes \id_M\right)\circ \left(\bar\Gamma_\delta\otimes\id_M\right)\circ \Lambda_\varepsilon \circ \left(\Gamma_\delta\otimes\id_M\right)\circ\left(\Phi_A\otimes\id_M\right).
\]
As a composition of stochastic maps, $\mathcal{T}_\varepsilon$ is stochastic, i.e.\ maps probability distributions to probability distributions. If we equip $M$ with the trivial Hamiltonian $\H_M=0$, the thermal distribution on $A\otimes M$ is $\gamma_A\otimes\mu_M$. Using some previous identities, it is easy to see that
\[
   \mathcal{T}_\varepsilon(\gamma_A\otimes\mu_M)=\gamma_A\otimes\mu_M,
\]
hence $\mathcal{T}_\varepsilon$ is a thermal operation. Similarly, we obtain
\[
   \mathcal{T}_\varepsilon\left(p_A\otimes r_M^{(\varepsilon)}\right)=\left(\bar\Phi_A\otimes\id_M\right)\circ\left(\bar\Gamma_\delta \otimes \id_M\right)(r_{A'M}^{(\varepsilon)})=:s_{AM}^{(\varepsilon)}.
\]
From this equation, we see that the marginal on $M$ is $s_M^{(\varepsilon)}=r_M^{(\varepsilon)}$. The marginal on $A$ is
\[
   s_A^{(\varepsilon)}=\bar\Phi_A\left(\bar\Gamma_\delta\left(r_{A'}^{(\varepsilon)}\right)\right)
   =\bar\Phi_A\left(\bar\Gamma_\delta\left(\Gamma_\delta\left(q_A^{(\varepsilon)}\right)\right)\right)=\bar\Phi_A\left(q_A^{(\varepsilon)}\right)=:q_A^{(\varepsilon,\delta)},
\]
and this distribution is $\varepsilon$-close to $q_A$:
\[
   \|q_A^{(\varepsilon,\delta)}-q_A\|\leq \|q_A^{(\varepsilon,\delta)}-q_A^{(\varepsilon)}\|+\|q_A^{(\varepsilon)}-q_A\|<\delta+\frac\varepsilon 2<\varepsilon.
\]
Thus, we can set $q_{AM}^\varepsilon:=s_{AM}^{(\varepsilon)}$. To prove the final part of the claim, recall the the relative entropy is non-increasing under stochastic maps, hence
\begin{eqnarray*}
S\left(q_{AM}^\varepsilon\|q_A^\varepsilon\otimes q_M^\varepsilon\right)&=& S\left(s_{AM}^{(\varepsilon)}\|s_A^{(\varepsilon)}\otimes s_M^{(\varepsilon)}\right)=S\left(\left(\bar\Phi_A\otimes\id_M\right)\circ \left(\bar\Gamma_\delta \otimes \id_M\right)(r_{A'M}^{(\varepsilon)})\| \bar\Phi_A\left(\bar\Gamma_\delta(r_{A'}^{(\varepsilon)})\right)\otimes r_M^{(\varepsilon)}\right)\\
&=&S\left(\left(\bar\Phi_A\otimes\id_M\right)\circ \left(\bar\Gamma_\delta \otimes \id_M\right)(r_{A'M}^{(\varepsilon)})\| \left(\bar\Phi_A\otimes\id_M\right)\circ \left(\bar\Gamma_\delta \otimes \id_M\right)(r_{A'}^{(\varepsilon)}\otimes r_M^{(\varepsilon)})
\right)\\
&\leq & S(r_{A'M}^{(\varepsilon)}\|r_{A'}^{(\varepsilon)}\otimes r_M^{(\varepsilon)})<\kappa.
\end{eqnarray*}
\vskip -2em
\end{proof}
In order to talk about work extraction, we need to introduce work bits. A \emph{work bit system $W$ with energy gap $\Delta\in\R$} is a binary system $W=\R^2$ with Hamiltonian $\H_W=(0,\Delta)$. We will usually consider situations where $\Delta\geq 0$, and we will in particular allow that $\Delta=0$, i.e.\ that $\H_W$ is degenerate.
\begin{theorem}[Performing work on the system]
\label{ThePerformingWork}
Consider a system $A$ with Hamiltonian $\H_A$ and two distributions $p_A$ and $q_A$ such that $F(p_A)\leq F(q_A)$. Suppose we would like to transform $p_A$ approximately into $q_A$ with the help of spending some energy $\Delta\geq 0$. Then, for every $\delta,\varepsilon>0$, we can find some $\Delta<F(q_A)-F(p_A)+\delta$ and a thermal operation $\mathcal{T}_{\delta,\varepsilon}$ such that
\begin{equation}
   \mathcal{T}_{\delta,\varepsilon}\left(p_A\otimes (0,1)_W\otimes q_M^{\delta,\varepsilon}\right)=q_{AM}^{\delta,\varepsilon}\otimes (1,0)_W,
\end{equation}
where $\|q_A^{\delta,\varepsilon}-q_A\|<\varepsilon$, $q_{AM}^{\delta,\varepsilon}$ is a suitable extension of $q_A^{\delta,\varepsilon}$, and $W$ is a work bit with energy gap $\Delta$. In particular, the work bit transforms from a pure excited state $(0,1)_W$ to a pure ground state $(1,0)_W$ and does not become correlated with $AM$.
\end{theorem}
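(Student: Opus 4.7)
The approach is to mirror the proof of the preceding theorem establishing Main Result~1, treating the enlarged system $\tilde A := AW$ as the primary system and letting $M$ play its catalytic role, with the crucial modification that the approximation machinery (the target perturbation, the rational approximation $\gamma_A^{(\eta)}$ of the Gibbs state, and the map $\Phi_A$ of Lemma~\ref{LemApprox}) is applied only to the $A$-factor, never to $W$. This leaves the work bit untouched and allows it to remain in an exact energy eigenstate throughout.

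First I would fix the theorem parameters $\delta,\varepsilon>0$ and choose $\Delta$ in the open interval $(F(q_A)-F(p_A),\,F(q_A)-F(p_A)+\delta)$ such that $e^{-\beta\Delta}$ is rational; such $\Delta$ form a dense set. Then $\gamma_W = (a/(a+b),\,b/(a+b))$ for some $a,b\in\mathbb{N}$ with $a>b$ (the latter because $\Delta>0$), so no approximation of $\gamma_W$ is needed and I may use the exact embedding $\Gamma_W$ with embedding vector $(a,b)$. Next I would perturb only the $A$-target, setting $q_A^{(\varepsilon)}:=(1-\varepsilon/2)q_A+(\varepsilon/2)\gamma_A$, which is full-rank and $(\varepsilon/2)$-close to $q_A$, and approximate $\gamma_A$ by a rational distribution $\gamma_A^{(\eta)}$ with associated embedding $\Gamma_\eta$ and Lemma~\ref{LemApprox}-maps $\Phi_A,\bar\Phi_A$, taking $\eta>0$ small enough (in particular $\eta<\varepsilon/2$) that the perturbed free-energy inequality $F(q_A^{(\varepsilon)})<F(p_A^{(\eta)})+\Delta$ still holds when free energies are computed relative to $\gamma_A^{(\eta)}$. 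This is possible by continuity, exactly as in the proof of Main Result~1.

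The main step is then to apply Theorem~\ref{TheMaj} on the combined embedded system $\tilde A':=A'\otimes W'$, with initial distribution $\Gamma_\eta(p_A^{(\eta)})\otimes\Gamma_W((0,1)_W)$ and target $\Gamma_\eta(q_A^{(\varepsilon)})\otimes\Gamma_W((1,0)_W)$. The Hartley-entropy hypothesis holds because $\Gamma_W((1,0)_W)$ has support of size $a$, strictly exceeding the size-$b$ support of $\Gamma_W((0,1)_W)$, while the Shannon-entropy hypothesis translates via Lemma~\ref{LemCanonMicro} into exactly the free-energy inequality arranged above. Theorem~\ref{TheMaj} then yields a system $M$ and an extension $r^{(\varepsilon)}_{\tilde A'M}$ of the embedded target with mutual information $I(\tilde A':M)$ as small as we like, and the majorization is implemented by a bistochastic map $\Lambda_\varepsilon$ on $\tilde A'\otimes M$. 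Setting
\[
   \mathcal T_{\delta,\varepsilon}:=(\bar\Phi_A\otimes\id_W\otimes\id_M)\circ(\bar\Gamma_\eta\otimes\bar\Gamma_W\otimes\id_M)\circ\Lambda_\varepsilon\circ(\Gamma_\eta\otimes\Gamma_W\otimes\id_M)\circ(\Phi_A\otimes\id_W\otimes\id_M)
\]
yields a stochastic map that preserves $\gamma_A\otimes\gamma_W\otimes\mu_M$ (with $\mu_M$ uniform on $M$), by the same chain of intermediate preservations as for Main Result~1, and is therefore a thermal operation on $AWM$ (with $H_M=0$).

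The delicate point, and the main obstacle, is to verify that $W$ ends exactly in the pure ground state $(1,0)_W$ and uncorrelated from $AM$. Since the $W'$-marginal of $r^{(\varepsilon)}_{\tilde A'M}$ equals $\Gamma_W((1,0)_W)$, which is supported only on the $a$ ``ground-block'' coordinates of $W'$, the full extension must vanish on every index where $W'$ lies in the ``excited-block'' (otherwise summing over $A'$ and $M$ would contribute a positive mass outside the support of the marginal). Applying $\bar\Gamma_W$ then collapses the ground block to $|g\rangle$ deterministically; since $P(W=g)=1$ after the operation, $W$ is automatically independent of $(A,M)$ in the joint distribution. The resulting state on $AWM$ thus takes the product form $\sigma_{AM}\otimes(1,0)_W$, with $\sigma_M=r^{(\varepsilon)}_M=:q_M^{\delta,\varepsilon}$ exactly preserved and $\sigma_A=\bar\Phi_A(q_A^{(\varepsilon)})$ satisfying $\|\sigma_A-q_A\|\leq\eta+\varepsilon/2<\varepsilon$ by Lemma~\ref{LemApprox} and the triangle inequality. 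Setting $q_{AM}^{\delta,\varepsilon}:=\sigma_{AM}$ and $q_A^{\delta,\varepsilon}:=\sigma_A$ then completes the construction.
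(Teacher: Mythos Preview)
Your proof is correct and follows essentially the same approach as the paper: choose $\Delta$ so that $\gamma_W$ has rational entries, perturb and rationally approximate only on the $A$-factor, embed $AW$, invoke Theorem~\ref{TheMaj}, and recover the exact pure work-bit output from the support structure of the embedded target. The one place your argument is slightly elliptic is the Hartley-entropy check, where you cite only the $W'$-support sizes $a>b$; the full inequality $H_0(\Gamma_\eta(p_A^{(\eta)})\otimes\Gamma_W((0,1)_W))\leq H_0(\Gamma_\eta(q_A^{(\varepsilon)})\otimes\Gamma_W((1,0)_W))$ also uses that $\Gamma_\eta(q_A^{(\varepsilon)})$ has maximal support (since $q_A^{(\varepsilon)}$ is full rank), a fact you established earlier but should invoke explicitly at that step.
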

\begin{proof}
We use the convention $\beta:=1/(k_B T)$. Consider the thermal state $\gamma_W$ of the work bit:
\[
   \gamma_W=\frac 1 {1+e^{-\beta\Delta}}\left(1,e^{-\beta\Delta}\right).
\]
The set of $\Delta$ for which $e^{-\beta\Delta}$ is rational is dense in $\R$. Thus, for every $\delta>0$, we can find some $\Delta$ with $F(q_A)-F(p_A)<\Delta<F(q_A)-F(p_A)+\delta$ such that $e^{-\beta\Delta}$ is rational. We pick one arbitrarily; consequently, $\gamma_W$ has rational entries.

In the following, we will suppress the dependence from $\delta$ for notational simplicity; it will however be explicitly denoted in the statement of the theorem.

Now let $q^{(\varepsilon)}_A:=(1-\frac{\varepsilon} 2 )q_A+\frac\varepsilon 2 \gamma_A$, then $q_A^{(\varepsilon)}$ has full rank, and
\begin{eqnarray}
S\left(q_A^{(\varepsilon)}\otimes (1,0)_W\|\gamma_A\otimes\gamma_W\right)&=& S\left(q_A^{(\varepsilon)}\|\gamma_A\right)+S\left((1,0)_W\|\gamma_W\right)
\leq S(q_A\|\gamma_A)+\beta \underbrace{F(\left( 1,0)_W\right)}_0-\beta F(\gamma_W)\nonumber\\
&=& \beta F(q_A)-\beta F(\gamma_A)-\beta F(\gamma_W)
<\beta\underbrace{\Delta}_{F((0,1)_W)}+\beta F(p_A)-\beta F(\gamma_A)-\beta F(\gamma_W)
\nonumber\\
&=& S(p_A\|\gamma_A)+S\left((0,1)_W\|\gamma_W\right) = S\left(p_A\otimes (0,1)_W\|\gamma_A\otimes \gamma_W\right).
\label{eqRelEntDiff}
\end{eqnarray}
In general, the thermal distribution $\gamma_A$ will have non-rational entries and thus not be of the form~(\ref{eqGibbsRational}). However, since distributions with rational entries are dense in the set of all distributions, for every $\kappa>0$, we can find another distribution $\gamma^{(\kappa)}_A$ with all rational entries and $\max_j(1-\gamma^{(\kappa)}_j/\gamma_j)<\kappa$ as well as $\max_j(1-\gamma_j/\gamma^{(\kappa)}_j)<\kappa$ (just pick $\gamma^{(\kappa)}$ close enough to $\gamma$). Due to Lemma~\ref{LemApprox}, there exists a stochastic map $\Phi:A\to A$ such that $\Phi(\gamma)=\gamma^{(\kappa)}$ and $\|\Phi(s)-s\|<\kappa$ for all distributions $s\in\R^n$, and there also exists a stochastic map $\bar\Phi:A\to A$ with $\bar\Phi(\gamma^{(\kappa)})=\gamma$ and $\|\bar\Phi(s)-s\|<\kappa$ for all distributions $s\in\R^n$. Set $p_A^{(\kappa)}:=\Phi(p_A)$, then $\|p_A-p_A^{(\kappa)}\|<\kappa$. Due to the continuity of the relative entropy, we can find some $0<\kappa<\varepsilon/2$ that is small enough such that the inequality of~(\ref{eqRelEntDiff}) is still true if $\gamma_A$ is replaced by $\gamma_A^{(\kappa)}$, and if $p_A$ is replaced by $p_A^{(\kappa)}$:
\begin{equation}
   S\left(q_A^{(\varepsilon)}\otimes (1,0)_W\|\gamma_A^{(\kappa)}\otimes \gamma_W\right)< S\left(p_A^{(\kappa)}\otimes (0,1)_W\|\gamma_A^{(\kappa)}\otimes \gamma_W\right).
   \label{eqEntropyUp}
\end{equation}
Since both $\gamma_A^{(\kappa)}$ and $\gamma_W$ have all rational entries, we can write
\[
   \gamma_A^{(\kappa)}\otimes\gamma_W=\left(\frac{d_1^{(\kappa)}}{D_\kappa},\frac{d_2^{(\kappa)}}{D_\kappa},\ldots,\frac{d_n^{(\kappa)}}{D_\kappa}\right),\qquad
   \mathbf{d}^\kappa:=\left(d_1^{(\kappa)},d_2^{(\kappa)},\ldots,d_n^{(\kappa)}\right),\qquad \mbox{all }d_i^{(\kappa)}\in\N,
\]
and obtain a corresponding embedding map $\Gamma_{\mathbf{d}^\kappa}=:\Gamma_\kappa$. Due to Lemma~\ref{LemCanonMicro}, we get
\begin{equation}
   H\left(\Gamma_\kappa\left(p_A^{(\kappa)}\otimes (0,1)_W\right)\right)< H\left(\Gamma_\kappa\left(q_A^{(\varepsilon)}\otimes (1,0)_W\right)\right).
   \label{eqH1Gamma}
\end{equation}
Let us now check the balance of R\'enyi divergence $S_0$, in analogy to~(\ref{eqEntropyUp}). Using $Z_W:=1+e^{-\beta\Delta}$, we get
\begin{eqnarray*}
S_0\left(q_A^{(\varepsilon)}\otimes (1,0)_W\|\gamma_A^{(\kappa)}\otimes \gamma_W\right)&=&\underbrace{S_0\left(q_A^{(\varepsilon)}\|\gamma_A^{(\kappa)}\right)}_0 + S_0\left( (1,0)_W\|\gamma_W\right)=\log Z_W,\\
S_0\left(p_A^{(\kappa)}\otimes (0,1)_W\|\gamma_A^{(\kappa)}\otimes\gamma_W\right)&\geq& S_0\left((0,1)_W\|\gamma_W\right)=\log Z_W+\beta\Delta\geq \log Z_W
\end{eqnarray*}
since $\Delta\geq 0$ (note that this is where it becomes important that we talk about performing work on the system, not about extracting work from the system). Using Lemma~\ref{LemCanonMicro} again, we obtain
\begin{equation}
   H_0\left(\Gamma_\kappa\left(p_A^{(\kappa)}\otimes (0,1)_W\right)\right)\leq H_0\left(\Gamma_\kappa\left(q_A^{(\varepsilon)}\otimes (1,0)_W\right)\right).
   \label{eqH2Gamma}
\end{equation}
Now we can apply Theorem~\ref{TheMaj}: denoting the image of $AW$ under $\Gamma_\kappa$ by $(AW)'$, it follows from~(\ref{eqH1Gamma}) and~(\ref{eqH2Gamma}) that there exists a distribution $r_{(AW)'M}^{(\varepsilon)}$ with $r_{(AW)'}^{(\varepsilon)}=\Gamma_\kappa \left(q_A^{(\varepsilon)}\otimes(1,0)_W\right)$ and $\Gamma_\kappa\left(p_A^{(\kappa)}\otimes (0,1)_W\right)\otimes r_M^{(\varepsilon)}\succ r_{(AW)'M}^{(\varepsilon)}$. Therefore, there exists a bistochastic map $\Lambda_{\varepsilon}:(AW)'M\to(AW)'M$ such that
\[
   \Lambda_{\varepsilon}\left(\Gamma_\kappa\left(p_A^{(\kappa)}\otimes (0,1)_W\right)\otimes r_M^{(\varepsilon)}\right)=r_{(AW)'M}^{(\varepsilon)}.
\]
We define a stochastic map $\bar\Gamma_\kappa:(AW)'\to AW$ which is a pseudo-inverse of $\Gamma_\kappa$ via
\[
   \bar\Gamma_\kappa(x):=\left(\sum_{i=1}^{d_1^\kappa} x_i,\sum_{i=d_1^\kappa}^{d_1^\kappa+d_2^\kappa}x_i,\ldots,\sum_{i=d_1^\kappa+\ldots+d_{n-1}^\kappa}^{D_\kappa} x_i\right)\qquad (x=(x_1,\ldots,x_n)\in\R^{D_\kappa}),
\]
so that we get $\bar\Gamma_\kappa\circ\Gamma_\kappa=\id_{AW}$. Now we define a linear map
\[
   \mathcal{T}_{\varepsilon}:=\left(\bar\Phi_A\otimes \id_{WM}\right)\circ \left(\bar\Gamma_\kappa\otimes\id_M\right)\circ \Lambda_{\varepsilon}\circ \left(\Gamma_\kappa \otimes\id_M\right)\circ\left(\Phi_A\otimes\id_{WM}\right).
\]
It is straightforward to check that $\mathcal{T}_{\varepsilon}$ maps the thermal state $\gamma_A\otimes\gamma_W\otimes\mu_M$ of $AWM$ onto itself, hence it is a thermal operation. Furthermore,
\[
   \mathcal{T}_{\varepsilon}\left(p_A\otimes (0,1)_W\otimes r_M^{(\varepsilon)}\right)=\left(\bar\Phi_A\otimes \id_{WM}\right)\circ \left(\bar\Gamma_\kappa\otimes \id_M\right)\left(r_{(AW)'M}^{(\varepsilon)}\right)=:s_{AWM}^{(\varepsilon)}.
\]
Thus $s_M^{(\varepsilon)}=r_M^{(\varepsilon)}$, and
\[
   s_{AW}^{(\varepsilon)}=\left(\bar\Phi_A\otimes \id_W\right)\circ\left(\bar\Gamma_\kappa\left(r_{(AW)'}^{(\varepsilon)}\right)\right)=\bar\Phi_A\otimes\id_W\left(\bar\Gamma_\kappa\left(\Gamma_\kappa\left(q_A^{(\varepsilon)}\otimes (1,0)_W\right)\right)\right)
   =\bar\Phi_A\left(q_A^{(\varepsilon)}\right)\otimes (1,0)_W.
\]
Since pure states are always uncorrelated with other systems, we obtain $s_{AWM}^{(\varepsilon)}=(1,0)_W\otimes s_{AM}^{(\varepsilon)}$.
We also get
\[
   \left\|s_A^{(\varepsilon)}-q_A\right\|\leq \left\| s_A^{(\varepsilon)}-q_A^{(\varepsilon)}\right\|+\left\|q_A^{(\varepsilon)}-q_A\right\|<\kappa+\frac\varepsilon 2 < \varepsilon.
\]
Thus, we may set $q_{AM}^{\delta,\varepsilon}:=s_{AM}^{(\varepsilon)}$.
\end{proof}

For work extraction, we need a notion of entropy sink. A ``max entropy sink'' $S$ consists of a large collection of states of the form
\[
   s^{(m,n)}:=\left(\underbrace{\frac 1 m,\frac 1 m,\ldots, \frac 1 m}_m,0,0,\ldots,0\right)\in\R^n,
\]
where $m,n\in\N$ and $m\leq n$. We will ``dump max entropy'' into $S$ by transforming these states into
\[
   s^{(m,n,\varepsilon)}:=\left(\underbrace{\frac{1-\varepsilon} m,\ldots,\frac{1-\varepsilon}m}_m,\frac \varepsilon{n-m},\ldots,\frac\varepsilon{n-m}\right)\in\R^n,
\]
where $0<\varepsilon<1$. We assume that the Hamiltonian of the sink is trivial, $\H_S=0$. Then we have the following entropy balance:
\begin{eqnarray*}
\Delta F:=F\left(s^{(m,n)}\right)-F\left(s^{(m,n,\varepsilon)}\right)&=&\frac 1 \beta\left(H(s^{(m,n,\varepsilon)})-H(s^{(m,n)})\right)= \frac 1 \beta \left(\eta(\varepsilon)+\varepsilon\log\frac{n-m}m\right),\\
\Delta F_0:=F_0\left(s^{(m,n)}\right)-F_0\left(s^{(m,n,\varepsilon)}\right)&=&\frac 1 \beta\left(H_0(s^{(m,n,\varepsilon)})-H_0(s^{(m,n)})\right)=\frac 1 \beta(\log n-\log m),
\end{eqnarray*}
where $\eta(\varepsilon)=-\varepsilon\log\varepsilon-(1-\varepsilon)\log(1-\varepsilon)$. In particular, by choosing $\varepsilon$ small enough, we can make $\Delta F$ as small as we like, while keeping $\Delta F_0$ constant. Note that the states $s^{(m,n)}$ have also been introduced in~\cite{Gour}, under the name ``sharp states''.

\begin{theorem}[Extracting work from the system]
Consider a system $A$ with Hamiltonian $\H_A$ and two distributions $p_A$ and $q_A$ such that $F(p_A)> F(q_A)$. Suppose we would like to extract some energy $\Delta>0$ by transforming $p_A$ approximately into $q_A$. Then, for every $\delta,\varepsilon>0$ and every $m,n\in\N$ with $n/m$ large enough, we can find some $\Delta>F(p_A)-F(q_A)-\delta$ and a thermal operation $\mathcal{T}_{\delta,\varepsilon}$ such that
\begin{equation}
   \mathcal{T}_{\delta,\varepsilon}\left(p_A\otimes (1,0)_W\otimes q_M^{\delta,\varepsilon}\otimes s_S^{(m,n)}\right)=q_{AMS}^{\delta,\varepsilon}\otimes (0,1)_W,
\end{equation}
where $\|q_A^{\delta,\varepsilon}-q_A\|<\varepsilon$, $W$ is a work bit with energy gap $\Delta$, $S$ is a max-entropy sink such that $q_S^{\delta,\varepsilon}=s_S^{(m,n,\varepsilon)}$, and $q_{AMS}^{\delta,\varepsilon}$ is a suitable extension of $q_A^{\delta,\varepsilon}$ and $q_S^{\delta,\varepsilon}$. In particular, the work bit transforms from a pure ground state $(1,0)_W$ to a pure excited state $(0,1)_W$ and does not become correlated with $AMS$, but this comes at the expense of dumping an arbitrarily small amount of entropy into $S$. In more detail, ``$n/m$ large enough'' means that the following inequality must hold:
\[
   \log \frac n m >\max\left\{\log 2, \beta F_0(q_A)-\beta F_0(\gamma_A)+\beta F(p_A)-\beta F(q_A)\right\}.
\]
Both $\mathcal{T}_{\delta,\varepsilon}$ and $q_{ABS}^{\delta,\varepsilon}$ depend on $m$ and $n$, which is however suppressed from the notation.
\end{theorem}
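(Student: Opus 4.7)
The plan is to mirror the proof of Theorem~\ref{ThePerformingWork} almost verbatim, with the max-entropy sink $S$ now playing the role of absorbing the unfavorable min-free-energy imbalance. Since the work bit transitions from $(1,0)_W$ to $(0,1)_W$, its min-free energy \emph{increases} by $\Delta>0$, and this was precisely what Subsection~\ref{SubsecCorrExtr} identified as the obstruction; the transition $s_S^{(m,n)}\to s_S^{(m,n,\varepsilon)}$ donates exactly $k_B T\log(n/m)$ of min-free energy while consuming only $O(k_B T\,\varepsilon\log(n/m))$ of Helmholtz free energy, which for fixed $n/m$ and $\varepsilon\searrow 0$ is arbitrarily small.

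First I would pick $\Delta$ with $e^{-\beta\Delta}$ rational (using density of rationals) in the interval $(F(p_A)-F(q_A)-\delta,\,F(p_A)-F(q_A))$, and regularize the target via $q_A^{(\varepsilon)}:=(1-\varepsilon/2)q_A+(\varepsilon/2)\gamma_A$, which has full rank and is $(\varepsilon/2)$-close to $q_A$. Additivity of relative entropy on product states reduces the required free-energy balance on $AWS$ to
\[
   F(q_A^{(\varepsilon)})+\Delta+F(s_S^{(m,n,\varepsilon)})<F(p_A)+F(s_S^{(m,n)}),
\]
which holds for $\varepsilon$ small because $\Delta<F(p_A)-F(q_A)$ strictly. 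Using $F_0(q_A^{(\varepsilon)})=F_0(\gamma_A)$ (full rank), the required $S_0$-balance reduces analogously to $\Delta\leq F_0(p_A)-F_0(\gamma_A)+k_B T\log(n/m)$, which is guaranteed by the stated hypothesis on $n/m$ together with $F_0(p_A)\geq F_0(\gamma_A)$.

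Next I would approximate $\gamma_A$ by a rational thermal state $\gamma_A^{(\kappa)}$ via Lemma~\ref{LemApprox} and embed on $AW$ via the associated map $\Gamma_\kappa$; the embedding acts as the identity on $S$ because $\mu_S$ is already uniform on the sink. By Lemma~\ref{LemCanonMicro}, the two relative-entropy inequalities above translate (for $\kappa$ chosen small enough to absorb the error introduced by $\Phi_A$) into strict $H>$ and $H_0\geq$ balances on the composite $(AW)'S$. Theorem~\ref{TheMaj} then supplies an extension $r_{(AW)'SM}^{(\varepsilon)}$ of the embedded target onto a catalyst $M$ such that the embedded initial state, tensored with the marginal $r_M^{(\varepsilon)}$, majorizes it, hence a bistochastic $\Lambda_\varepsilon$ realizing this transition. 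Defining
\[
   \mathcal{T}_{\delta,\varepsilon}:=(\bar\Phi_A\otimes\id_{WSM})\circ(\bar\Gamma_\kappa^{AW}\otimes\id_{SM})\circ\Lambda_\varepsilon\circ(\Gamma_\kappa^{AW}\otimes\id_{SM})\circ(\Phi_A\otimes\id_{WSM})
\]
yields a thermal operation because every factor preserves $\gamma_A\otimes\gamma_W\otimes\mu_S\otimes\mu_M$.

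Finally I would check the output marginals: the $W$-marginal is \emph{exactly} $(0,1)_W$ (the embedded target is a product containing $\Gamma_\kappa((0,1)_W)$, which $\bar\Gamma_\kappa$ inverts exactly on the $W$-factor), so the output factorizes as $(0,1)_W\otimes\sigma_{ASM}$; the $S$-marginal is exactly $s_S^{(m,n,\varepsilon)}$ (the embedding is trivial on $S$, and $\bar\Phi_A$ does not touch $S$); the $M$-marginal is exactly $q_M^{(\varepsilon)}$ by the defining property of the extension in Theorem~\ref{TheMaj}; and the $A$-marginal is $\bar\Phi_A(q_A^{(\varepsilon)})$, within trace distance $\kappa+\varepsilon/2<\varepsilon$ of $q_A$. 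The main obstacle will be the coordinated choice of scales $\delta,\varepsilon,\kappa$ against the ratio $n/m$ so that both relative-entropy inequalities on the composite hold strictly after regularization, while preserving exact purity of $W$ and the exact form $s_S^{(m,n,\varepsilon)}$ of $S$ through the embedding/pullback; this is only slightly more intricate than in Theorem~\ref{ThePerformingWork} because now three, rather than two, auxiliary systems must be tracked simultaneously.
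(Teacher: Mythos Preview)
Your proposal is correct and follows essentially the same route as the paper: choose $\Delta$ with $e^{-\beta\Delta}$ rational, verify the $S_1$ and $S_0$ balances on the composite $AWS$, rationalize $\gamma_A$ via Lemma~\ref{LemApprox}, embed $AW$ by $\Gamma_\kappa$, invoke Theorem~\ref{TheMaj} on $(AW)'S$ to produce the catalyst $M$ and a bistochastic $\Lambda_\varepsilon$, and sandwich everything into a Gibbs-preserving map $\mathcal{T}_{\delta,\varepsilon}$.

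The one substantive difference is that you regularize the target to $q_A^{(\varepsilon)}=(1-\varepsilon/2)q_A+(\varepsilon/2)\gamma_A$, mirroring Theorem~\ref{ThePerformingWork}, whereas the paper works directly with $q_A$. Your regularization forces $S_0(q_A^{(\varepsilon)}\|\gamma_A^{(\kappa)})=0$, so the $S_0$-balance reduces to $\beta\Delta\leq \log(n/m)$ plus the nonnegative term $S_0(p_A^{(\kappa)}\|\gamma_A^{(\kappa)})$. The paper instead keeps $q_A$ and uses the stated hypothesis on $n/m$ (which contains the term $\beta F_0(q_A)-\beta F_0(\gamma_A)=S_0(q_A\|\gamma_A)$) precisely to absorb the possible rank deficiency of $q_A$. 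Both approaches are valid; yours makes the $S_0$ bookkeeping slightly simpler but renders the $F_0(q_A)$ term in the hypothesis on $n/m$ redundant, while the paper's explains why that term is there. One small caution: your phrase ``for $\kappa$ chosen small enough to absorb the error introduced by $\Phi_A$'' is fine for the $S_1$ inequality but cannot be taken literally for $S_0$ (which is discontinuous in its first argument); however, since you ultimately only use the trivial lower bound $S_0(p_A^{(\kappa)}\|\gamma_A^{(\kappa)})\geq 0$, this does not affect the argument.
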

\begin{proof}
The proof is very similar to that of Theorem~\ref{ThePerformingWork}. First, similarly as in the proof of Theorem~\ref{ThePerformingWork}, we will choose some $\Delta$ with $F(p_A)-F(q_A)-\delta<\Delta<F(p_A)-F(q_A)$ such that $e^{-\beta\Delta}$ is rational. Consequently, $\gamma_W$ has only rational entries. Let us suppress the dependence from $\delta$ in the notation in the following. We have
\begin{eqnarray}
S\left(q_A\otimes (0,1)_W\otimes s_S^{(m,n,\varepsilon)}\|\gamma_A\otimes\gamma_W\otimes\mu_S\right)&=& S(q_A\|\gamma_A)+S\left((0,1)_W\|\gamma_W\right)+S\left(s_S^{(m,n,\varepsilon)}\|\mu_S\right)\nonumber\\
&=& \beta F(q_A)-\beta F(\gamma_A)+\beta F((0,1)_W)-\beta F(\gamma_W)+\beta F(s_S^{(m,n,\varepsilon)})-\beta F(\mu_S)\nonumber\\
&<& \beta F(p_A)-\beta F(\gamma_A)-\beta F(\gamma_W)+\beta F(s_S^{(m,n,\varepsilon)})-\beta F(\mu_S)-\beta\Delta\nonumber\\
&=& S(p_A\|\gamma_A)+S((1,0)_W\|\gamma_W)+S\left(s_S^{(m,n)}\|\mu_S\right)-\eta(\varepsilon)-\varepsilon\log\frac{n-m}m\nonumber\\
&<& S\left(p_A\otimes (1,0)_W\otimes s_S^{(m,n)}\|\gamma_A\otimes\gamma_W\otimes\mu_S\right).
\label{eqSDiffAgain}
\end{eqnarray}
Similarly as in the proof of Theorem~\ref{ThePerformingWork}, we will now choose some $\kappa$ with $0<\kappa<\varepsilon$ such that we obtain a distribution $\gamma_A^{(\kappa)}$ with all rational entries and maps $\Phi,\bar\Phi:A\to A$ such that $\Phi(\gamma_A)=\gamma_A^{(\kappa)}$ as well as $\bar \Phi(\gamma_A^{(\kappa)})=\gamma_A$ and $\|\Phi(s)-s\|<\kappa$ and $\|\bar \Phi(s)-s\|<\kappa$ for all probability distributions $s\in A$. We also set $p_A^{(\kappa)}:=\Phi(p_A)$. Our $\kappa$ is chosen small enough such that
\[
   S\left(q_A\otimes (0,1)_W\otimes s_S^{(m,n,\varepsilon)}\|\gamma_A^{(\kappa)}\otimes\gamma_W\otimes \mu_S\right)< S\left(p_A^{(\kappa)}\otimes (1,0)_W\otimes s_S^{(m,n)}\|\gamma_A^{(\kappa)}\otimes\gamma_W\otimes\mu_S\right).
\]
Since $S_0(q_A\|\gamma_A)+\beta\Delta=\beta F_0(q_A)-\beta F_0(\gamma_A)+\beta\Delta<\log\frac n m$, we can choose $\kappa$ also small enough to have $S_0(q_A\|\gamma_A^{(\kappa)})+\beta\Delta<\log \frac n m$, because $S_0$ is continuous in the second entry (though not in the first entry). Using additivity of $S_0$ on tensor products, we obtain
\[
\underbrace{\left(S_0(q_A\|\gamma_A^{(\kappa)}\right)}_{<\log\frac n m -\beta\Delta}+\underbrace{S_0((0,1)_W\|\gamma_W)}_{\log Z_W+\beta\Delta}+\underbrace{S_0\left(s_S^{(m,n,\varepsilon)}\|\mu_S\right)}_0<
 \underbrace{S_0\left(p_A^{(\kappa)}\|\gamma_A^{(\kappa)}\right)}_0+\underbrace{S_0((1,0)_W\|\gamma_W)}_{\log Z_W} + \underbrace{S_0\left(s_S^{(m,n)}\|\mu_S\right)}_{\log\frac n m},
\]
which gives us the analog of~(\ref{eqSDiffAgain}) for $S_0$ due to its additivity on tensor products. Similarly as in the proof of Theorem~\ref{ThePerformingWork}, since $\gamma_A^{(\kappa)}\otimes\gamma_W$ has all rational entries, we obtain a corresponding embedding map $\Gamma_\kappa:AW\to(AW)'$. For $\alpha\in\{0,1\}$, it satisfies
\begin{eqnarray*}
H_\alpha\left(\Gamma_\kappa\left(p_A^{(\kappa)}\otimes (1,0)_W\right)\otimes s_S^{(m,n)}\right)	&=& \log D_\kappa -S_\alpha\left(p_A^{(\kappa)}\otimes (1,0)_W\|\gamma_A^{(\kappa)}\otimes \gamma_W\right)+\log n -S_\alpha\left(s_S^{(m,n)}\|\mu_S\right)\\
&=&\log D_\kappa+\log n -S_\alpha\left(p_A^{(\kappa)}\otimes (1,0)_W\otimes s_S^{(m,n)}\|\gamma_A^{(\kappa)}\otimes\gamma_W\otimes\mu_S\right)\\
&<& \log D_\kappa+\log n -S_\alpha\left(q_A\otimes (0,1)_W\otimes s_S^{(m,n,\varepsilon)}\|\gamma_A^{(\kappa)}\otimes\gamma_W\otimes \mu_S\right)\\
&=&H_\alpha\left(\Gamma_\kappa\left(q_A\otimes (0,1)_W\right)\otimes s_S^{(m,n,\varepsilon)}\right).
\end{eqnarray*}
Denoting the image of $AW$ under $\Gamma_\kappa$ by $(AW)'$, we can again invoke Theorem~\ref{TheMaj}, obtaining a distribution $r_{(AW)'MS}^{(\varepsilon)}$ with $\Gamma_\kappa\left(p_A^{(\kappa)}\otimes (1,0)_W\right)\otimes r_M^{(\varepsilon)}\otimes s_S^{(m,n)} \succ r_{(AW)'MS}^{(\varepsilon)}$ and $r_{(AW)'S}^{(\varepsilon)}=\Gamma_\kappa\left(q_A\otimes (0,1)_W\right)\otimes s_S^{(m,n,\varepsilon)}$. Thus, there exists a bistochastic map $\Lambda_\varepsilon:(AW)'MS\to(AW)'MS$ such that
\[
    \Lambda_\varepsilon\left(\Gamma_\kappa\left(p_A^{(\kappa)}\otimes (1,0)_W\right)\otimes r_M^{(\varepsilon)}\otimes s_S^{(m,n)}\right)=r_{(AW)'MS}^{(\varepsilon)}.
\]
Defining a pseudo-inverse $\bar\Gamma_\kappa$ exactly as in the proof of Theorem~\ref{ThePerformingWork}, we can define our linear map this time as
\[
   \mathcal{T}_\varepsilon:=\left(\bar\Phi_A\otimes \id_{WMS}\right)\circ \left(\bar\Gamma_\kappa\otimes\id_{MS}\right)\circ\Lambda_\varepsilon\circ \left(\Gamma_\kappa\otimes\id_{MS}\right)\circ\left(\Phi_A\otimes\id_{WMS}\right).
\]
It is easy to see that $\mathcal{T}_\varepsilon$ maps the thermal state $\gamma_A\otimes\gamma_W\otimes\mu_M\otimes\mu_S$ of $AWMS$ onto itself, hence it is a thermal operation. Furthermore,
\[
   \mathcal{T}_\varepsilon\left(p_A\otimes (1,0)_W\otimes r_M^{(\varepsilon)}\otimes s_S^{(m,n)}\right)=\left(\bar\Phi_A\otimes\id_{WMS}\right)\circ\left(\bar\Gamma_\kappa\otimes\id_{MS}\right)\left(r_{(AW)'MS}^{(\varepsilon)}\right)=:s_{AWMS}^{(\varepsilon)}.
\]
It follows that $s_{MS}^{(\varepsilon)}=r_{MS}^{(\varepsilon)}$, and
\[
   s_{AW}^{(\varepsilon)}=\left(\bar\Phi_A\otimes\id_W\right)\left(\bar\Gamma_\kappa\left(r_{(AW)'}^{(\varepsilon)}\right)\right)=
   \left(\bar\Phi_A\otimes\id_W\right)\left(\bar\Gamma_\kappa\left(\Gamma_\kappa(q_A\otimes (0,1)_W)\right)\right)=\bar\Phi_A(q_A)\otimes (0,1)_W.
\]
Since pure states are uncorrelated with other systems, we get $s_{AWMS}^{(\varepsilon)}=(0,1)_W\otimes s_{AMS}^{(\varepsilon)}$. We also get
\[
   \|s_A^{(\varepsilon)}-q_A\|=\|\bar\Phi_A(q_A)-q_A\|<\kappa<\varepsilon.
\]
Thus, we can set $q_{AMS}^{\delta,\varepsilon}:=s_{AMS}^{(\varepsilon)}$.
\end{proof}

\subsection{Work cost example from Subsection~\ref{SubsecExample}}
\label{AppendixQubit}
Our goal is to determine under what conditions the transition
\begin{equation}
   \gamma_A\otimes\sigma_M\otimes |e\rangle\langle e|_W \longrightarrow \rho'_{AM} \otimes |g\rangle\langle g|_W
   \label{eqlor}
\end{equation}
can be accomplished by a thermal operation, without additional catalyst. Labelling (and sorting) the eigenvectors of $AMB$ by
\[
   |g_A 0 g\rangle, \enspace |g_A 0 e\rangle, \enspace |g_A 1 g\rangle,\enspace |g_A 1 e\rangle,\enspace |e_A 0 g\rangle, \enspace |e_A 0 e\rangle,\enspace |e_A1g\rangle,\enspace |e_A 1 e\rangle,
\]
the state on the left-hand side corresponds to the probability distribution
\[
   p_{AMW}=\left( 0,\frac 1 5,0,\frac 7 {15}, 0 ,\frac 1 {10},0,\frac 7 {30}\right),
\]
and the state on the right-hand side to
\[
   q_{AMW}=\left(\frac 1 {10}, 0, \frac 2 5,0,\frac 1 5,0,\frac 3 {10},0\right).
\]
The sorted energy eigenvalues are
\[
   (E_1,\ldots,E_8)=(0,\Delta,0,\Delta,E_A,E_A+\Delta,E_A,E_A+\Delta),
\]
where $E_A=k_B T \log 2$. We use the \emph{thermomajorization criterion} as explained, for example, in the Supplementary Note E of~\cite{HorodeckiOppenheim}: there exists a thermal operation mapping $p$ to $q$ if and only if the thermal Lorenz curve of $p$ is everywhere on or above the thermal Lorenz curve of $q$. Using Mathematica, we have generated the plots in Figure~\ref{fig_lorenz} for $\Delta=.26 k_B T$, which shows that $p$'s curve (in blue) is indeed nowhere below $q$'s curve (in orange); the same must then be true for larger values of $\Delta$ (and we have numerically verified this). We have also used Mathematica to verify directly the necessary inequalities for all ``elbow points'' of the curves.
\begin{figure}
\begin{center}
\includegraphics[width=.4\textwidth]{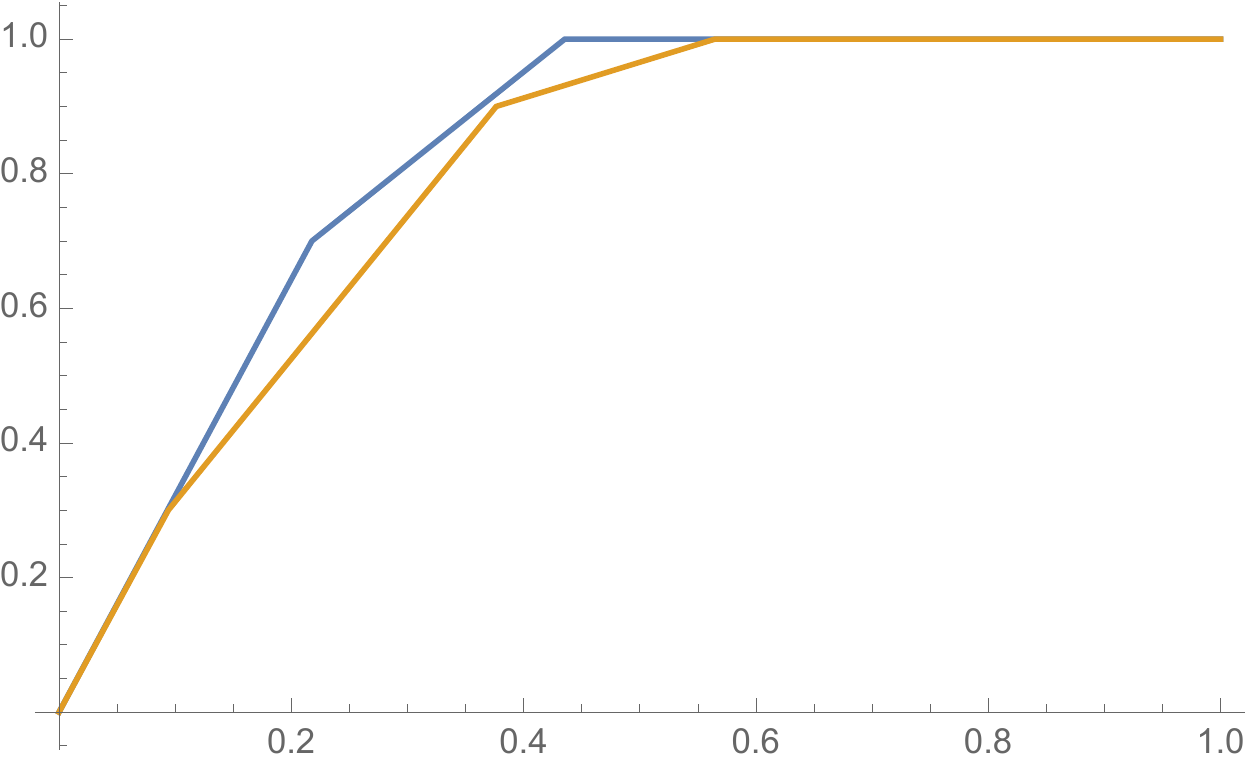}
\end{center}
\caption{\small The thermal Lorenz curves signify the possibility of state transition~(\ref{eqlor}) by a thermal operation.}\label{fig_lorenz}
\end{figure}

\end{document}